
\pdfoutput=1

\documentclass[envcntsame]{llncs}
\usepackage{amsfonts,amsmath}

\usepackage{graphicx}
\usepackage{subfig} 
\usepackage{url}
\usepackage{verbatim}
\usepackage{color}
\definecolor{lgray}{gray}{0.92}
\definecolor{lblue}{rgb}{0.90,0.90,1.00}
\definecolor{lyellow}{rgb}{1.00,1.00,0.70}

\usepackage{listings}
\lstloadlanguages{Haskell}

\newenvironment{codex}{\small\verbatim}{\endverbatim\normalsize}

\lstnewenvironment{code}
    {\lstset{}%
      \csname lst@SetFirstLabel\endcsname}
    {\csname lst@SaveFirstLabel\endcsname}
    \lstset{
      basicstyle=\small\ttfamily,
      backgroundcolor=\color{lgray},
      flexiblecolumns=false,
      basewidth={0.5em,0.45em},
      literate={+}{{$+$}}1 {/}{{$/$}}1 {*}{{$*$}}1 {=}{{$=$}}1
               {>}{{$>$}}1 {<}{{$<$}}1 {\\}{{$\lambda$}}1
               {\\\\}{{\char`\\\char`\\}}1
               {->}{{$\rightarrow$}}2 {>=}{{$\geq$}}2 {<-}{{$\leftarrow$}}2
               {<=}{{$\leq$}}2 {=>}{{$\Rightarrow$}}2 
               {>>}{{>>}}2 {>>=}{{>>=}}2
               {|}{{$\mid$}}1
    }
\newtheorem{prop}{Proposition}

\newtheorem{ex}{Example}
\newcommand{\BI}[0]{\begin{itemize}}
\newcommand{\EI}[0]{\end{itemize}}

\newcommand{\BE}[0]{\begin{enumerate}}
\newcommand{\EE}[0]{\end{enumerate}}

\newcommand{\BX}[0]{\begin{ex}}
\newcommand{\EX}[0]{\end{ex}}

\def \bscale1 {0.25}
\def \bscale {0.25}
\def \N {\mathbb{N}}
\def \T {\tt T}
\def \M {\tt M}
\def \P {\tt P}

\def \C {\tt Cat}

\newcommand{\FIG}[4]{
\begin{figure}[htbp]
\centering
{\includegraphics[scale=#3]{figs/#4}}
\caption{#2}
\label{#1}
\end{figure}
}



\newcommand{\HFIGS}[7]{
\begin{figure}[htbp]
  \centering
  \subfloat[#3]{
    {\includegraphics[scale=#7]{figs/#5}}}   
  \subfloat[#4]{
    {\includegraphics[scale=#7]{figs/#6}}}
  \caption{#2}
  \label{#1}
\end{figure}
}



\begin{document}

\title{ 
  A Generic Numbering System based on Catalan Families of Combinatorial Objects
}

\author{Paul Tarau}
\institute{
   {Department of Computer Science and Engineering}\\
   {University of North Texas}\\ 
   {\em paul.tarau@unt.edu}\\
}

\maketitle

\date{}

\pagestyle{plain}

\begin{abstract}

We describe  arithmetic algorithms
on a canonical number representation based 
on the Catalan family of combinatorial objects
specified as a Haskell type class.

Our algorithms work on a {\em generic} representation
that we illustrate on instances members of the Catalan family,
like ordered binary and multiway trees. 
We validate the correctness of our
algorithms by defining an instance of the same type class
based  the usual bitstring-based natural numbers.
  
While their average and worst case
complexity is within
constant factors of their
traditional counterparts,
our algorithms provide
super-exponential gains   
for numbers corresponding to
Catalan objects of low
representation size.



{\em {\bf Keywords:}

tree-based numbering systems,
cross-validation with type classes,
arithmetic with  combinatorial objects,
Catalan families,
generic functional programming algorithms.
}

\end{abstract}

\section{Introduction}

This paper generalizes the results of \cite{lata14} and \cite{ictac14tarau}, 
where special instances of the Catalan family \cite{stanleyEC,knuth_trees} 
of combinatorial objects 
(the language of balanced parentheses and the set ordered rooted trees
with empty leaves, respectively)  have been endowed 
with basic arithmetic operations corresponding to those on 
bitstring-represented natural numbers and extends \cite{padl16a} with
advanced arithmetic operations and computations involving record holding primes
and values of the Collatz sequence on gigantic numbers.

The main contribution of this paper is a {\em generic}
Catalan family based numbering
system that supports computations with numbers comparable in size
with Knuth's ``arrow-up'' notation.
These computations have an  average and worst case and complexity
that is comparable or better than the traditional  binary numbers,
while on neighborhoods of iterated powers of two they
 outperform binary numbers by an arbitrary
tower of exponents factor. 

As the Catalan family   contains a large
number of computationally isomorphic but structurally distinct
combinatorial objects, we will describe our
arithmetic computations generically, using Haskell's {\em type
classes} \cite{DBLP:conf/popl/WadlerB89}, of which typical members
of the Catalan family,
like binary trees and multiway trees will be described as instances.

At the same time, an {\em atypical instance} will be derived,
representing the set of {\em natural numbers} $\N$, which will be
used to cross-validate the correctness of our generically
defined arithmetic operations.

The paper is organized as follows.
Section \ref{related} discusses related work.
Section \ref{cats} introduces a generic view of Catalan families as
a Haskell type class, with subsection \ref{nats} embedding the set of natural
numbers as an instance of the family.
Section \ref{arith} introduces
basic algorithms for arithmetic operations taking advantage of our number
representation, with subsection \ref{succ} focusing on
constant time successor and predecessor operations.
Section \ref{addsub} 
describes   arithmetic operations
that favor operands of low representation complexity including
computations with giant numbers.
Section \ref{concl} concludes the paper. 

We have adopted
a {\em literate programming} style, i.e. the
the code described in the paper
forms a self-contained Haskell module (tested with ghc 7.10.2).
It is available 
at { \url{http://www.cse.unt.edu/~tarau/research/2014/GCat.hs}}.

\section{Related work} \label{related}

The first instance of a {\em hereditary number system} representing natural numbers
as multiway trees
occurs in the proof of Goodstein's theorem \cite{goodstein}, where
replacement of finite numbers on a tree's branches by the ordinal $\omega$
allows him to prove that a ``hailstone sequence'', after visiting arbitrarily
large numbers, eventually turns around and terminates.

Another hereditary number system is Knuth's TCALC program \cite{tcalc} that
decomposes $n=2^a+b$ with $0 \leq b<2^a$ and then recurses on $a$ and $b$ with the same decomposition. Given the constraint on $a$ and $b$, while hereditary, the TCALC system is not based on a bijection between $\N$ and $\N \times \N$ and therefore the representation is not a bijection. Moreover, the literate C-program that defines it
only implements successor, addition, comparison and multiplication, and does not
provide a constant time power of 2 and low complexity leftshift / rightshift 
operations.  

Several notations for very large numbers have been invented in the past. Examples
include Knuth's {\em up-arrow} notation \cite{knuthUp},
covering operations like the {\em tetration} (a notation for towers of exponents).
In contrast to the tree-based natural numbers we describe in this paper,
such notations are not closed under
addition and multiplication, and consequently
they cannot be used as a replacement
for ordinary binary or decimal numbers.

While combinatorial enumeration and combinatorial generation, for which a
vast literature exists (see for 
instance  \cite{stanleyEC}, \cite{combi99} and \cite{knuth_trees})
can be seen as providing unary Peano arithmetic operations implicitly,
like in \cite{lata14} and \cite{ictac14tarau}, the algorithms in this paper
enable arithmetic
computations of efficiency comparable to the usual
binary numbers (or better) using combinatorial families.

Providing a {\em generic mechanism} for efficient arithmetic computations
with {\em arbitrary members of the Catalan family} is the main motivation and 
the most significant contribution of this paper. It is notationally similar
to the type class mechanism sketched in \cite{bintrees}. Unfortunately,
the simpler binary tree-based computation of \cite{bintrees} does not support the
$O(log^*(n))$ successor and predecessor operations described in this paper,
that facilitate size proportionate encodings of data types.
Such encodings are the critical component of the companion paper 
\cite{padl16b}, which describes their application  to lambda terms.

\section{The Catalan family of combinatorial objects} \label{cats}

The Haskell data type {\tt T} representing ordered rooted binary trees
with empty leaves {\tt E} and branches provided by the constructor {\tt C}
is a typical member of the Catalan family of combinatorial objects \cite{stanleyEC}.
\begin{code}
data T = E | C T T deriving (Eq,Show,Read)
\end{code}
Note the use of the type classes {\tt Eq, Show} and {\tt Read}
to derive structural equality and respectively human readable
output and input for this data type.

The data type {\tt M} is another well-known member of the Catalan family, 
defining multiway ordered rooted trees with empty leaves.
\begin{code}  
data M = F [M] deriving (Eq,Show,Read)
\end{code}

\subsection{A generic view of Catalan families as a Haskell type class } \label{class}
We will work through the paper with a generic data type ranging over instances
of the type class {\tt Cat}, representing a member of the Catalan family
of combinatorial objects \cite{stanleyEC}.
\begin{code}
class (Show a,Read a,Eq a) => Cat a where
  e :: a
  
  c :: (a,a) -> a
  c' :: a -> (a,a) 
\end{code}
The zero element is denoted {\tt e} and
we inherit from classes {\tt Read} and {\tt Show} which
ensure derivation of input and output functions
for members of type class {\tt Cat}
as well as from type class {\tt Eq} that ensures derivation of the
structural equality predicate {\verb~==~} and its negation \verb~/=~.

We will also define the corresponding recognizer predicates
{\tt e\_} and {\tt c\_}, relying on the derived equality
relation inherited from the Haskell type class {\tt Eq}.
\begin{code}    
  e_ :: a -> Bool  
  e_ a = a == e
  
  c_ :: a -> Bool
  c_ a = a /= e
\end{code}

For each instance, we assume that {\tt c} and {\tt c'} are inverses on their respective domains {\tt Cat} $\times$ {\tt Cat} and {\tt Cat - \{e\}}, and {\tt e} is distinct from objects constructed with {\tt c}, more precisely that the following hold:

\begin{equation} \label{inv}
\forall x.~c' (c~x)=x  \land \forall y.~(c\_ ~y \Rightarrow c~(c'~y) = y)
\end{equation}

\begin{equation} \label{excl}
\forall x.~ (e\_~x \lor c\_~x) \land \lnot ( e\_~x \land c\_~x )
\end{equation}

When talking about ``objects of type {\tt Cat}'' we will actually mean 
an instance {\tt a} of the polymorphic type {\tt Cat a} that verifies 
equations (\ref{inv}) and (\ref{excl}).

\subsection{The instance $\T$ of ordered rooted binary trees}

The operations defined
in type class {\tt Cat}
correspond naturally to the ordered rooted binary tree view of the Catalan
family, materialized as the data type $\T$.
\begin{code} 
instance Cat T where
  e = E
  
  c (x,y) = C x y 
  
  c' (C x y) = (x,y)
\end{code}
Note that adding and removing the constructor  {\tt C} trivially verifies
the assumption that our generic operations {\tt c} and {\tt c'}
are inverses\footnote{In fact, one can see the functions {\tt e, e\_ , c, c', c\_}
as a generic API 
abstracting away the essential properties
of the constructors {\tt E} and {\tt C}.
}.

\subsection{The instance $\M$ of ordered rooted multiway trees}
The alternative view of the Catalan family as multiway trees is
materialized as the data type $\M$.
\begin{code}
instance Cat M where
  e = F []
  c (x,F xs) = F (x:xs)
  c' (F (x:xs)) = (x,F xs)
\end{code}
Note that the
assumption that our generic operations {\tt c} and {\tt c'}
are inverses is easily verified in this case as well, given
the bijection between binary and multiway trees. 
Moreover, note that operations
on types {\tt T} and {\tt M}, expressed
in terms of their generic
type class {\tt Cat} counterparts, result
in a constant extra effort. Therefore, we will
safely ignore it when discussing the
complexity of different operations.

\subsection{An unusual member of the Catalan family: the set of natural numbers $\N$} \label{nats}

The  (big-endian) binary representation of a natural number can be written as a concatenation of binary digits of the form
\begin{equation} \label{bin}
n=b_0^{k_0}b_1^{k_1}\ldots b_i^{k_i} \ldots b_m^{k_m}
\end{equation}
with  $b_i \in \{0,1\}$,$~b_i \neq b_{i+1}$ and the highest digit  $b_m=1$.
The following hold.
\begin{prop}
An even number of the form $0^ij$ corresponds
to the operation $2^ij$ and an odd number of the form $1^ij$ corresponds
to the operation $2^i(j+1)-1$. 
\end{prop}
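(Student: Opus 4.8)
The plan is to read the statement purely at the level of positional binary values: I interpret a string of the form $0^i j$ (resp. $1^i j$) as the natural number whose binary expansion is obtained by adjoining a run of $i$ zeros (resp. $i$ ones) at the units end, with $j$ denoting the number formed by the remaining, more significant bits. This is forced by the stated normalization $b_m=1$ together with the two target formulas, and one can sanity-check it on a small case such as $11=1011_2$, whose lowest run is $1^2$ with high part $j=10_2=2$, giving $2^2(2+1)-1=11$. Under this reading the proposition asserts exactly two value identities, $0^i j = 2^i j$ and $1^i j = 2^i(j+1)-1$, and the whole task reduces to verifying them.

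First I would dispatch the even case $0^i j$. Appending a single $0$ at the least significant end is a left shift, i.e. multiplication by $2$; iterating, a run of $i$ zeros multiplies by $2^i$, while the appended bits themselves contribute nothing. Concretely, the high part occupies positions shifted up by $i$ and so contributes $j\cdot 2^i$, and the $i$ zero bits contribute $0$, so the total is $2^i j$. I would phrase this either as the one-line positional computation just given, or, if a fully formal argument is wanted, as an induction on $i$ with base case $0^0 j = j$ and step $0^{i+1} j = 2\cdot(0^i j)$.

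For the odd case $1^i j$ the high part again contributes $j\cdot 2^i$ after the shift, but now the $i$ low-order one-bits contribute $\sum_{k=0}^{i-1} 2^k$. The single algebraic ingredient is the geometric-series identity $\sum_{k=0}^{i-1} 2^k = 2^i-1$, after which
\[
1^i j = j\cdot 2^i + (2^i-1) = (j+1)2^i - 1 = 2^i(j+1)-1 ,
\]
as claimed. As before this also admits an inductive phrasing using the successor/left-shift relation $1^{i+1} j = 2\cdot(1^i j)+1$, which dovetails with the constant-time successor operation developed later.

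Since the computations are elementary, the only genuine difficulty is conceptual rather than technical: pinning down unambiguously that the exponentiated block sits at the units end, so that adjoining a run acts by $x\mapsto 2^i x$ or $x\mapsto 2^i(x+1)-1$ rather than leaving the value unchanged, and checking the boundary cases $i=0$ and $j=0$ (the latter covering $n=2^i-1=1^i$). I would also remark, for later use, that consecutive runs must carry distinct digits, which is what makes the run decomposition $n=b_0^{k_0}\cdots b_m^{k_m}$ — and hence the two operations — well defined; but that observation concerns the canonicity of the global representation and is not needed for the local identities proved here.
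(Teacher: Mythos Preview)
Your proof is correct and essentially matches the paper's. The paper dispatches the even case as ``clearly'' a power-of-$2$ shift and, for the odd case, introduces $f(x)=2x+1$ and appeals to the inductive identity $f^i(j)=2^i(j+1)-1$; your direct geometric-series computation $j\cdot 2^i+\sum_{k=0}^{i-1}2^k=2^i(j+1)-1$ is an equivalent and slightly more explicit route, and the inductive phrasing you mention as an alternative is exactly the paper's argument.
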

\begin{proof}
It is clearly the case that $0^ij$ corresponds to multiplication by a power of $2$. If $f(i)=2i+1$ then it is shown by induction (see \cite{sac14}) that the $i$-th iterate of $f$, $f^i$ is computed as in the equation (\ref{fiter})
\begin{equation}
f^i(j)=2^i(j+1)-1 \label{fiter}
\end{equation}
Observe that each block $1^i$ in $n$, represented as $1^ij$ in equation (\ref{bin}), corresponds to the iterated application of $f$, $i$ times,
$n=f^i(j)$.
\end{proof}

\begin{prop} \label{parity}
A number $n$ is even if and only if it contains an even number of blocks of the form 
$b_i^{k_i}$ in equation (\ref{bin}). 
A number $n$ is odd if and only if it contains an odd number of blocks of the form 
$b_i^{k_i}$ in equation (\ref{bin}). 
\end{prop}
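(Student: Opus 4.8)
The plan is to reduce the claim to a statement about the single least significant bit of $n$, and then to exploit the fact that in the run-length decomposition (\ref{bin}) the digit values of consecutive blocks are forced to alternate. First I would record the structural observation that each block $b_i^{k_i}$ is a \emph{maximal} run of equal digits, so adjacent blocks must carry different digit values, i.e. $b_i \neq b_{i+1}$ for every $i$; since $b_i \in \{0,1\}$, this forces the sequence $b_0, b_1, \ldots, b_m$ to alternate strictly between $0$ and $1$. The total number of blocks in (\ref{bin}) is then exactly $m+1$.

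Next I would pin the alternation to a known endpoint using the normalization that the most significant block has digit $b_m = 1$. Reading the alternation downward from the top, one gets $b_{m-t}=1$ when $t$ is even and $b_{m-t}=0$ when $t$ is odd. The least significant digit of $n$ is $b_0$, which occurs at $t=m$, so $b_0 = 1$ precisely when $m$ is even. Because the parity of $n$ is governed entirely by its least significant bit $b_0$, and the block count is $m+1$, both equivalences drop out at once: $n$ is even iff $b_0 = 0$ iff $m$ is odd iff $m+1$ is even, and dually $n$ is odd iff $b_0 = 1$ iff $m$ is even iff $m+1$ is odd. An alternative, more mechanical route is to induct on the number of blocks using the preceding proposition: since blocks alternate, appending a fresh block at the low-order end flips the least significant digit and increases the block count by one, so the parity of $n$ and the parity of the block count move in lockstep; a base case of a single block ($b_0 = b_m = 1$, so $n$ odd and count $1$) then closes the argument.

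I expect the only genuinely load-bearing step — the place where a careless argument could slip — to be the strict alternation of the $b_i$, which relies essentially on the blocks being \emph{maximal} runs. Without maximality the count $m+1$ is not even well defined, since two consecutive same-digit blocks could be merged, and the correspondence with parity would collapse. A secondary bookkeeping point is simply fixing the convention that $b_0$ is the least significant block and $b_m$ the most significant one (with $b_m = 1$), consistent with the decomposition in (\ref{bin}); once maximality and this orientation are in place, everything else is immediate counting.
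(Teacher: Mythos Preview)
Your argument is correct and follows essentially the same line as the paper's own proof: both hinge on the observation that the highest block carries digit $1$ and that the block digits strictly alternate, so the parity of the block count matches the parity of the least significant digit. Your write-up simply makes explicit the bookkeeping (maximality forcing alternation, $b_0$ determining parity, and the count $m+1$) that the paper compresses into a single sentence.
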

\begin{proof}
It follows from the fact that the highest digit (and therefore the last block in big-endian representation) is $1$ and the parity of the blocks alternate.
\end{proof}

This suggests defining the {\tt c} operation of type class {\tt Cat}
as follows.
\begin{equation}\label{cnat}
c(i,j)=
\begin{cases}
{2^{i+1}j}  & {\text{if } j \text{ is odd}},\\
{2^{i+1}(j+1)-1} & {\text{if } j \text{ is even}}.
\end{cases}
\end{equation}
Note that the exponents are $i+1$ instead of $i$ as we start counting at $0$. Note also
that  $c (i,j)$ will be even when $j$ is odd and odd when $j$ is even.

\begin{prop}
The equation (\ref{cnat}) defines a bijection  $c: \N \times \N \to  \N^+=\N-\{0\}$.
\end{prop}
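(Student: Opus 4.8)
The plan is to show that $c$ maps into $\N^+$ and then to exhibit an explicit two-sided inverse $d : \N^+ \to \N \times \N$, which simultaneously settles injectivity and surjectivity. The only tool I need is the unique factorization of a positive integer as a power of two times an odd number: every $n \in \N^+$ can be written uniquely as $n = 2^a m$ with $a \geq 0$ and $m$ odd, and I write $v(n)=a$ for this exponent (the $2$-adic valuation).

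First I would record that $c$ indeed lands in $\N^+$ and that the two branches of (\ref{cnat}) are separated by the parity of the output, exactly as noted after the definition: when $j$ is odd, $c(i,j)=2^{i+1}j$ is even since $i+1\ge 1$, and when $j$ is even, $c(i,j)=2^{i+1}(j+1)-1$ is odd, being an even multiple of $2^{i+1}$ minus one. In both branches $c(i,j)\ge 1$, so the restriction of the codomain to $\N^+$ is justified. This parity dichotomy is the crux of the whole argument: it lets me decide, from $n$ alone, which branch any preimage of $n$ must come from.

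Next I would build the inverse by splitting on the parity of $n$. If $n$ is even then $v(n)\ge 1$, and I set $i=v(n)-1$ and $j=n/2^{v(n)}$; here $j$ is odd by definition of $v$, so $c(i,j)=2^{i+1}j=n$ recovers $n$ through the odd branch. If $n$ is odd then $n+1$ is even, $v(n+1)\ge 1$, and I set $i=v(n+1)-1$ and $j=(n+1)/2^{v(n+1)}-1$; then $j+1$ is odd, so $j$ is even, and $c(i,j)=2^{i+1}(j+1)-1=(n+1)-1=n$ through the even branch. In each case $i\ge 0$ and $j$ carries the parity required by its branch, so $d(n)=(i,j)$ is well defined and $c(d(n))=n$, yielding surjectivity.

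It then remains to verify $d(c(i,j))=(i,j)$, i.e. injectivity. Here the parity dichotomy does the work again: from $c(i,j)=c(i',j')=n$ the parity of $n$ forces both pairs into the same branch, and within a branch $n$ (respectively $n+1$) has a unique factorization as $2^{i+1}\times(\text{odd})$, so uniqueness of $v$ gives $i=i'$ and $j=j'$. I expect the only delicate point --- and the one I would write out most carefully --- to be the bookkeeping that the reconstructed $j$ genuinely has the parity of its branch and that the off-by-one in the exponent ($i+1$ versus $i$, and $j+1$ versus $j$) is handled consistently; once that is pinned down, the argument reduces to uniqueness of the $2$-adic valuation. As a sanity check one could instead read (\ref{cnat}) through the block decomposition (\ref{bin}) and Proposition \ref{parity}, where $c(i,j)$ adjoins a new run of length $i+1$ to the decomposition of $j$; this gives the same conclusion, but the valuation argument above is self-contained.
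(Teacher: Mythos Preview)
Your proposal is correct and follows essentially the same approach as the paper: the paper does not give a discursive proof at all but immediately constructs the inverse $c'$ via the dyadic valuation (presented as Haskell code), which is exactly your map $d$ split on the parity of $n$. Your write-up simply makes explicit the verifications that the code leaves implicit, including the parity dichotomy of the output and the off-by-one bookkeeping.
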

Therefore {\tt c} has an inverse {\tt c'}, that we will constructively define
together with {\tt c}. 
The following Haskell code
defines the instance of the Catalan family corresponding to $\N$.
\begin{code}
type N = Integer
instance Cat Integer where
  e = 0

  c (i,j) | i>=0 && j>=0 = 2^(i+1)*(j+b)-b where b = mod (j+1) 2
\end{code}
The definition of the inverse {\tt c'} relies on the {\em dyadic valuation} of a number $n$, $\nu_2(n)$, defined as the largest exponent of 2 dividing $n$, implemented as the helper function {\tt dyadicVal}. 
\begin{code}   
  c' k | k>0 = (i-1,j-b) where
    b = mod k 2
    (i,j) = dyadicVal (k+b)

    dyadicVal k | even k = (1+i,j) where  (i,j) = dyadicVal (div k 2)
    dyadicVal k = (0,k)  
\end{code}
Note the use of the parity {\tt b} in both definitions, which differentiates between
the computations for {\em even} and {\em odd} numbers.

The following examples illustrate the use of {\tt c} and {\tt c'} on this instance.
\begin{codex}
*GCat> c (100,200)
509595541291748219401674688561151
*GCat> c' it
(100,200)
*GCat> map c' [1..10]
[(0,0),(0,1),(1,0),(1,1),(0,2),(0,3),(2,0),(2,1),(0,4),(0,5)]
*GCat> map c it
[1,2,3,4,5,6,7,8,9,10]
\end{codex}

Figure \ref{cats2018} illustrates the DAG obtained by applying the operation
{\tt c'} repeatedly and merging identical subtrees for three consecutive numbers. 
The order of the edges is
marked with {\tt 0} and {\tt 1}.
\FIG{cats2018}{DAG representing the list [2018..2022]}{0.30}{cats2018.pdf}

\subsection{Other instances using members of the Catalan family of combinatorial objects}

The language of balanced parentheses (Dyck words) is another member of the Catalan family, easily seen as in bijection with multiway trees. A convenient representation is one in which
a list of positive integers are used to mark the position of each enclosing parenthesis, marked each with a {\tt 0}. As the offset between opening and closing parenthesis is always even, we adjust that with a linear transformation ensuring the range is the positive integers. Thus \verb~((()())(()))~ (corresponding to {\tt 24}) is represented as {\tt [3,1,0,1,0,0,2,1,0,0]} and \verb~(()()()()()())~, corresponding to {\tt 42} is represented as {\tt [1,0,1,0,1,0,1,0,1,0,1,0]}.

After defining a data type wrapping list of integers, we obtain the instance {\tt P} of type class {\tt Cat}.

\begin{code}
data P = P [Int] deriving (Eq,Show,Read)

instance Cat P where
  e = P []
  c (P xs,P ys) = P ((l:xs)++(0:ys)) where l=1+((length xs) `div` 2)
  c' (P (l:xs)) = (P (take (x-1) xs),P (drop x xs)) where x = 2*l-1
\end{code}

\subsection{The transformers: morphing between instances of the Catalan family}

As all our instances implement the bijection {\tt c} and its inverse {\tt c'}, a generic transformer from an instance to another is defined
by the function {\tt view}:
\begin{code}
view :: (Cat a, Cat b) => a -> b
view z | e_ z = e
view z | c_ z = c (view x,view y) where (x,y) = c' z
\end{code}
To obtain transformers defining bijections with $\N, \T$ and $\M$ as ranges, we
will simply provide specialized type declarations for them:
\begin{code}
n :: Cat a => a->N
n = view
\end{code}
\begin{code}
t :: Cat a => a->T
t = view
\end{code}
\begin{code}
m :: Cat a => a->M
m = view
\end{code}
\begin{code}  
p :: Cat a => a->P
p = view 
\end{code}
The following examples illustrate the resulting specialized conversion
functions:
\begin{codex}
*GCat> t 42
C E (C E (C E (C E (C E (C E E)))))
*GCat> m it
F [F [],F [],F [],F [],F [],F []]
*GCat> p it
P [1,0,1,0,1,0,1,0,1,0,1,0]
*GCat> n it
42
\end{codex}

A list view of an instance of type class {\tt Cat} is obtained by
iterating the constructor {\tt c} and its inverse {\tt c'}.
\begin{code}
to_list :: Cat a => a -> [a]
to_list x | e_ x = []
to_list x | c_ x  = h:hs where 
    (h,t) = c' x
    hs = to_list t
\end{code}
\begin{code}
from_list :: Cat a => [a] -> a
from_list [] = e
from_list (x:xs) = c (x,from_list xs)
\end{code}
They work as follows:
\begin{codex}
*GCat> to_list 2020
[1,0,1,5]
*GCat> from_list it
2020
\end{codex}
The function {\tt to\_list} corresponds to the children of a node in the multiway tree view provided by instance {\tt M}. Along the lines of
 \cite{calc09fiso,ppdp14tarau} one can use
{\tt to\_list} and {\tt from\_list}
to define size-proportionate bijective encodings of sets, multisets and data types built 
from them.

The function {\tt catShow} provides a view as a string of balanced parentheses, an instance of the Catalan family for which arithmetic computations 
are introduced in \cite{lata14}.
\begin{code}   
catShow :: Cat a => a -> [Char]
catShow x | e_ x = "()"
catShow x | c_ x = r where
    xs = to_list x
    r = "(" ++ (concatMap catShow xs) ++ ")"
\end{code}
It is illustrated below.
\begin{codex}
*GCat> catShow 0
"()"
*GCat> catShow 1
"(())"
*GCat> catShow 12345
"(()(())(()())(()()())(()))"
\end{codex}

\section{Generic arithmetic operations on members of the Catalan family} \label{arith}

We will now implement arithmetic operations on Catalan families,
generically, in terms of the operations on type class {\tt Cat}.

\subsection{Basic utilities}
We start with some simple functions to be used later.

\subsubsection{Inferring even and odd}
As we know for sure that the instance $\N$, corresponding to
natural numbers supports arithmetic operations, we will
mimic their behavior  at the level of
the type class {\tt Cat}.

The operations {\tt even\_} and {\tt odd\_} implement the
observation following from of Prop. \ref{parity} that parity (staring with $1$
at the highest block) alternates with each block of distinct $0$ or $1$ digits.
\begin{code}
even_ :: Cat a => a -> Bool
even_ x | e_ x = True
even_ z | c_ z = odd_ y where (_,y)=c' z

odd_ :: Cat a => a -> Bool
odd_ x | e_ x = False
odd_ z | c_ z = even_ y where (_,y)=c' z
\end{code}

\subsubsection{One}
We also provide a constant {\tt u} and
a recognizer predicate {\tt u\_} for $1$.
\begin{code}
u :: Cat a => a
u = c (e,e)

u_ :: Cat a => a-> Bool
u_ z = c_ z && e_ x && e_ y where (x,y) = c' z
\end{code}

\subsection{Average constant time successor and predecessor} \label{succ}
We will now specify successor and predecessor on the family of data types $\C$
through two mutually recursive functions, {\tt s} and {\tt s'}.
They are based on arithmetic observations
about the behavior of these blocks when incrementing or decrementing
a binary number by {\tt 1}, derived from equation (\ref{cnat}).

They first decompose their arguments using {\tt c'}. Then,
after transforming them as a result of adding or subtracting {\tt 1},
they place back the results with the {\tt c} operation.

Note that the two functions work {\em on a block of {\tt 0} or {\tt 1}
digits at a time}. 
The main intuition is that as adding 
or subtracting {\tt 1} changes the parity of a number
and as carry-ons propagate over a block of {\tt 1}s in the case of addition
and over a block of {\tt 0}s in the case of subtraction,
{\em blocks} of contiguous {\tt 0} and {\tt 1} 
digits will be flipped as a result of
applying {\tt s} or {\tt s'}.

\begin{code}
s :: Cat a => a -> a 
s x | e_ x = u -- 1
s z | c_ z && e_ y = c (x,u) where -- 2
   (x,y) = c' z
\end{code}
For the general case, the successor function {\tt s} delegates the transformation of
the blocks of $0$ and $1$ digits to functions {\tt f} and {\tt g}
handling {\tt even\_} and respectively {\tt odd\_} cases. 
\begin{code}
s a | c_ a = if even_ a then f a else g a where

   f k | c_ w && e_ v = c (s x,y) where -- 3
    (v,w) = c' k
    (x,y) = c' w
   f k = c (e, c (s' x,y)) where -- 4
     (x,y) = c' k     
     
   g k | c_ w && c_ n && e_ m = c (x, c (s y,z)) where -- 5
    (x,w) = c' k
    (m,n) = c' w
    (y,z) = c' n  
   g k | c_ v = c (x, c (e, c (s' y, z))) where -- 6
    (x,v) = c' k
    (y,z) = c' v
\end{code}
The predecessor function {\tt s'} inverts the work of {\tt s}
as marked by a comment of the form \verb~k --~, for \verb~k~
ranging from {\tt 1} to {\tt 6}.
\begin{code}    
s' :: Cat a => a -> a
s' k | u_ k = e where -- 1
    (x,y) = c' k  
s' k | c_ k && u_ v = c (x,e) where -- 2
    (x,v) = c' k 
\end{code}
For the general case,  {\tt s'} delegates the transformation of
the blocks of $0$ and $1$ digits to functions {\tt g} and {\tt f}
handling {\tt even\_} and respectively {\tt odd\_} cases.
\begin{code}
s' a | c_ a = if even_ a then g' a else f' a where

     g' k | c_ v && c_ w && e_ r = c (x, c (s y,z)) where -- 6
       (x,v) = c' k
       (r,w) = c' v
       (y,z) = c' w    
     g' k  | c_ v = c (x,c (e, c (s' y, z))) where -- 5
       (x,v) = c' k
       (y,z) = c' v     
       
     f' k | c_ v && e_ r = c (s x,z) where -- 4
        (r,v) = c' k
        (x,z) = c' v
     f' k =  c (e, c (s' x,y)) where -- 3
        (x,y) = c' k
\end{code}
One can see that their use matches successor and predecessor on instance {\tt N}:
\begin{codex}
*GCat> map s [0..15]
[1,2,3,4,5,6,7,8,9,10,11,12,13,14,15,16]
*GCat> map s' it
[0,1,2,3,4,5,6,7,8,9,10,11,12,13,14,15]
\end{codex}

\begin{prop}
Denote $\C^+=\C-\{\text{\tt e}\}$.
The functions $s:\C \to \C^+$ and $s':\C^+ \to \C$ are inverses.
\end{prop}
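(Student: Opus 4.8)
The plan is to prove the two composite identities $s'(s(x)) = x$ for all $x \in \C$ and $s(s'(y)) = y$ for all $y \in \C^+$ \emph{simultaneously}, by structural induction on the number of {\tt c}-nodes in the argument. Proving both together is essential rather than cosmetic: the recursive calls inside {\tt s} sometimes invoke {\tt s'} (clauses {\tt 4} and {\tt 6}) and vice versa, so when undoing a given clause of {\tt s} I will need the induction hypothesis for the \emph{other} composite, applied to a strictly smaller subterm. Before the main case analysis I would record two auxiliary facts. First, a \emph{totality-and-range} fact: every clause of {\tt s} returns a term of the form {\tt c}$(\cdot,\cdot)$, so {\tt s} maps $\C$ into $\C^+$, and dually the clauses of {\tt s'} are exhaustive on $\C^+$ (the fall-through clauses of {\tt f'} and {\tt g'} are reached only with a nonempty second component, which one checks using (\ref{excl})). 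Second, a \emph{parity-flip} lemma: {\tt s} and {\tt s'} each swap the truth values of {\tt even\_} and {\tt odd\_}, which follows directly from the fact that {\tt even\_} of {\tt c}$(x,y)$ coincides with {\tt odd\_} of $y$, together with the shape of each returned term.

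The parity-flip lemma is the linchpin of the whole argument. Because {\tt s} toggles parity, the output of {\tt s} on an \emph{even} argument (routed through {\tt f}, i.e.\ clauses {\tt 3} and {\tt 4}) is \emph{odd}, so {\tt s'} routes it through {\tt f'}; and the output of {\tt s} on an odd argument (routed through {\tt g}, clauses {\tt 5} and {\tt 6}) is even, so {\tt s'} routes it through {\tt g'}. This is exactly what makes the clause labels {\tt 1}--{\tt 6} line up across the two definitions, as the source comments advertise. The induction measure is the representation size of the original argument $x$: in each of the clauses {\tt 3}--{\tt 6} there is exactly one recursive call, to {\tt s} or {\tt s'}, applied to a proper subterm of $x$, so the hypothesis applies.

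For the base clauses the verification is immediate. In clause {\tt 1}, $s(e) = u = c(e,e)$, and since {\tt u\_} holds on $u$, clause {\tt 1} of {\tt s'} returns $e$. In clause {\tt 2}, an argument $c(x,e)$ is sent to $c(x,u)$; here {\tt u\_} fails but the second component is exactly $u$, so clause {\tt 2} of {\tt s'} fires and rebuilds $c(x,e)$. Each remaining clause follows the same template: decompose the output of {\tt s}-clause $k$ with {\tt c'}, check that its guards select {\tt s'}-clause $k$, and then collapse the single recursive occurrence using the appropriate induction hypothesis and the inverse law (\ref{inv}), $c'(c\,x)=x$ and $c(c'\,y)=y$, to recover the original term. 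The opposite direction $s(s'(y))=y$ is handled by the mirror-image case analysis.

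The main obstacle is the guard bookkeeping, not the algebra. For each output term I must confirm that {\tt s'} enters precisely the mirror clause and is not captured by an earlier one: I have to exclude clauses {\tt 1} and {\tt 2} of {\tt s'} (the {\tt u\_} tests) and, inside {\tt f'} and {\tt g'}, pick the intended sub-clause. This is where the preliminary facts do the work: the early {\tt u\_} clauses are ruled out because a successor value $s\,x$ is never $e$ and because the relevant second component is forced to be nonempty by the parity constraint that put us in the current branch (for instance, in {\tt f}-clause {\tt 3} the {\tt even\_} constraint on the input forces the inner block to be nonempty, so $u$ cannot appear where it would mis-trigger clause {\tt 2}). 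Once the routing is pinned down, each reconstruction is a mechanical application of (\ref{inv}), so the proof reduces to a finite, if somewhat tedious, verification of six mirrored clause pairs in each direction.
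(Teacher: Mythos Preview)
Your approach is essentially the same as the paper's: structural induction, with each clause {\tt -- k} of {\tt s} matched against the identically-labelled clause of {\tt s'}. The paper's proof is a one-line sketch that simply asserts this correspondence, whereas you actually supply the machinery that makes it work---in particular the parity-flip lemma, which is what guarantees that the output of clause $k$ of {\tt s} is routed to clause $k$ of {\tt s'} rather than being captured by a different guard. Your version is therefore a fleshed-out execution of the same plan rather than a different argument.
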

\begin{proof}
For each instance of $\C$, it follows by structural induction after observing that 
patterns for rules marked with the number {\tt -- k}
in {\tt s} correspond one by one to patterns marked by
{\tt -- k} in {\tt s'} and vice versa.
\end{proof}

More generally, it can be shown that Peano's axioms hold
and as a result $<\C,e,s>$ is a {\em Peano algebra}.
This is expected, as {\tt s} provides a combinatorial enumeration
of the infinite stream of Catalan objects, as illustrated below
on instance {\tt T}:
\begin{codex}
Cats> s E
C E E
*GCat> s it
C E (C E E)
*GCat> s it
C (C E E) E
\end{codex}

The function {\tt nums} generates an initial segment of
the ``natural numbers'' defined by an instance of {\tt Cat}.
\begin{code}
nums :: Cat a => a -> [a]
nums x = f x [] where 
  f x xs | e_ x = e:xs
  f x xs = f (s' x) (x:xs)
\end{code}

{\em Note that if parity information is kept explicitly, the calls to {\tt odd\_} and {\tt even\_} are constant time}, as we will assume in the rest of the paper.
We will also assume, that when complexity is discussed, 
{\em a representation like the tree data types
$\T$ or $\M$ are used, making the operations {\tt c} and {\tt c'} constant time}.
Note also that this is clearly not the case for the instance {\tt N} using
the traditional bitstring representation
where effort proportional to the length of the bitstring may be involved.

\begin{prop}
The worst case time complexity of the {\tt s} and {\tt s'} operations
on an input $n$ is given by the {\em iterated logarithm} $O(log_2^*(n))$.
\end{prop}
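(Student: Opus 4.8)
The plan is to reduce the claim to a single \emph{logarithmic descent} step and then iterate it. Concretely, I would show that one invocation of $s$ or $s'$ on a non-trivial argument performs $O(1)$ work and issues exactly one recursive call to $s$ or $s'$, and that the value handed to this recursive call is at most the binary logarithm of the value of the current argument. Iterating, the recursion depth is then bounded by the number of times $\log_2$ can be applied to $n$ before reaching a constant, i.e. by $\log_2^*(n)$. Throughout I use the assumptions recorded just above: on a tree representation such as $\T$ or $\M$ the operations $c$, $c'$ and the parity tests run in constant time. I also identify each object with the natural number it denotes under the bijection (\ref{cnat}).

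The descent rests on two elementary consequences of (\ref{cnat}). First, if $m=c(a,b)\neq e$ then in both cases ($b$ odd giving $m=2^{a+1}b$, $b$ even giving $m=2^{a+1}(b+1)-1$) we have $m\geq 2^{a+1}-1\geq 2^{a}$, so the first component obeys $a\leq\log_2 m$. Second, a second component is always strictly smaller than the whole: $m=c(a,b)\neq e$ forces $b<m$, again by inspecting the two cases. With these in hand I would verify the descent lemma by walking through the six rule pairs {\tt --k}. In each non-base case the function applies $c'$ once or twice, recurses on a single component, and rebuilds with a constant number of $c$'s; crucially, the recursive argument is always a \emph{first} component of some sub-object $o$ that is either the input $m$ itself or a second component of it. Combining the two facts gives the argument $\leq\log_2 o\leq\log_2 m$. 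For example, in rule {\tt --3} we have $m=c(e,w)=2w+1$, so $w<m$, and the recursion is on the first component $x$ of $w$, whence $x\leq\log_2 w<\log_2 m$; the cases {\tt --4}, {\tt --5}, {\tt --6} and their $s'$ duals are handled the same way.

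It remains to unfold the recursion. The descent lemma yields a chain of values $n=m_0>m_1>\cdots$ with $m_{i+1}\leq\log_2 m_i$, hence $m_i\leq\log_2^{(i)}(n)$, the $i$-fold iterated logarithm. The recursion stops once a base case is reached (rules {\tt --1} or {\tt --2}, i.e. value $0$ or $1$), which by the very definition of the iterated logarithm occurs after at most $\log_2^*(n)+O(1)$ steps. Since each step costs $O(1)$, the total running time is $O(\log_2^*(n))$, as claimed.

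The main obstacle I anticipate is the uniform bookkeeping in the descent lemma: one must check across all six mutually recursive rule pairs that the recursive argument shrinks to the \emph{logarithm} of the current value, not merely to a smaller value. This forces one to track the depth at which each recursive argument sits inside the nested $c'$-decomposition and to apply the size inequality from (\ref{cnat}) at precisely that node; the two-level decompositions appearing in rules {\tt --3}, {\tt --5} and {\tt --6} are where this is most delicate.
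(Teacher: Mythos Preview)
Your proposal is correct and follows the same approach as the paper: establish the recurrence $T(n)=T(\log_2 n)+O(1)$ by arguing that each non-base step does $O(1)$ work and recurses on a first component bounded by $\log_2$ of the current value, then solve for $O(\log_2^*(n))$. The paper's own proof is a two-line sketch stating exactly this recurrence; you have simply filled in the per-rule case analysis that the paper omits. (One harmless slip: in your illustration of rule~{\tt --3} you write $m=c(e,w)=2w+1$, but rule~{\tt --3} lies in the \emph{even} branch of~{\tt s}, so in fact $m=2w$; the inequality $w<m$ and the subsequent bound on~$x$ are unaffected.)
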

\begin{proof}
Note that calls to {\tt s,s'} in {\tt s} or {\tt s'} happen
on terms
at most logarithmic in the bitsize of their operands. 
The recurrence relation
counting the worst case number of calls to {\tt s} or {\tt s'} is:
$T(n)=T({{log_2(n)}})+O(1)$, which solves to $T(n)=O(log_2^*(n))$.
\end{proof}
Note that this is much better than the logarithmic worst case for binary numbers (when computing, for instance,
binary 111...111+1=1000...000).

\begin{prop} \label{avg}
{\tt s} and {\tt s'} are constant time, on the average. 
\end{prop}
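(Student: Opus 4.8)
The claim is that $\mathtt{s}$ and $\mathtt{s'}$ run in constant time \emph{on the average}. I need to make precise what ``on the average'' means and then establish the constant bound.

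\medskip

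The plan is to analyze the amortized cost by tracking how a single call to $\mathtt{s}$ (or $\mathtt{s'}$) triggers recursive calls to $\mathtt{s}$ and $\mathtt{s'}$ on the \emph{exponent} subterms. First I would observe from the defining rules (the cases marked $1$ through $6$) that each invocation does a constant amount of work at the top level --- a fixed number of $\mathtt{c}$/$\mathtt{c'}$ applications and a parity test, all constant time under the tree representation assumption stated before Prop.~\ref{avg} --- and then recurses \emph{at most once}, and only on a strictly smaller subterm, namely the exponent $x$ appearing inside the leading block. The recursion stops immediately (rules $1$--$2$) whenever the leading block contributes a count of $1$, i.e.\ whenever $\mathtt{e\_}$ holds for the relevant subterm.

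\medskip

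The key step is an averaging argument over the stream of successive successors. I would compute the successor $n+1$ starting from $n$ for $n=0,1,2,\dots,N-1$ and bound the \emph{total} number of recursive $\mathtt{s}/\mathtt{s'}$ calls made across all $N$ applications, then divide by $N$. The mechanism is exactly the classic amortized analysis of a binary counter: a deep chain of recursion is triggered only at the relatively rare positions where a long run of digits must be propagated, and the cost of a rare deep call is paid for by the many cheap calls preceding it. Concretely, a recursive call of depth $d$ into the exponent occurs only once per roughly every $2^{\Theta(\text{tower of height } d)}$ increments, because the exponent itself must be incremented or decremented, which happens only when the whole inner block rolls over. Summing a geometric-type series $\sum_d d \cdot (\text{frequency of depth-}d)$ yields a bound independent of $N$, giving $O(1)$ amortized cost. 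The same telescoping applies to $\mathtt{s'}$ since, by the inverse relationship established in the preceding proposition, the predecessor exactly undoes the successor and thus incurs the dual cost profile.

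\medskip

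The main obstacle will be making the frequency estimate rigorous rather than heuristic: I must show that the number of increments in $[0,N)$ that induce a recursive call reaching depth at least $d$ decays fast enough in $d$ that $\sum_d (\text{such counts})$ is $O(N)$. I expect to set up a recurrence analogous to $T(n)=T(\log_2 n)+O(1)$ from the worst-case proposition, but now accumulated over the arithmetic progression of consecutive values, and to argue that the positions of depth-$d$ recursion are spaced according to an iterated-exponential gap. An honest treatment would either invoke the standard amortized-counter result adapted to the run-length-compressed encoding, or pass to a potential-function argument where the potential is the representation size (number of blocks) of the current number; then the amortized cost of each $\mathtt{s}$ call is its actual cost plus the change in potential, and one checks that this telescopes to $O(1)$ per operation over any run of consecutive successors.
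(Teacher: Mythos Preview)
Your proposal takes a genuinely different route from the paper. The paper's proof is a two-line expected-value argument over the uniform distribution on $n$-bit numbers: it simply observes that the expected length of a contiguous block of identical bits is bounded by $\sum_{k=0}^{n} 2^{-k} = 2 - 2^{-n} < 2$, and since the single recursive call inside \texttt{s}/\texttt{s'} is on (the encoding of) the leading block's count, this immediately caps the average number of calls at a constant (the paper quotes $2.25$ after a small correction for $2$-bit numbers). There is no amortization over a run of increments, no potential function, and no tower-of-exponents frequency analysis.

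Your amortized/potential-function plan is more elaborate and, as you yourself flag, the hard step is turning the heuristic ``depth-$d$ recursion occurs once per tower-of-height-$d$ increments'' into a rigorous bound. The inner sequence of block counts traced out by successive increments is not itself a clean binary counter (blocks are created, merged, and destroyed), so the direct binary-counter analogy requires real work before it goes through. The paper sidesteps all of this by averaging over inputs rather than over a sequence of operations, at the price of being rather terse (the jump from ``average block length $<2$'' to ``average call count $\le 2.25$'' is asserted more than derived). If completed, your approach would yield the stronger statement of amortized $O(1)$ over \emph{any} run of consecutive successors, not merely expected $O(1)$ on a random input; but the paper's one-line observation is what is actually invoked downstream (e.g.\ in Section~\ref{stru}).
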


\begin{proof}
When computing
the successor or predecessor of a number of bitsize $n$,
calls to {\tt s,s'} in {\tt s} or {\tt s'} happen on at most one subterm.
Observe that the average size of a contiguous block of {\tt 0}s or {\tt 1s} in a
number of bitsize $n$ has the upper bound 2 as
${\sum_{k=0}^{n} {1 \over 2^k}} = 2-{1 \over {2^n}} < 2$.
As on 2-bit numbers we have an average of $0.25$ more calls, we can conclude that the total average number of calls is constant, with upper bound $2+0.25=2.25$.
\end{proof}

A quick empirical evaluation confirms this.
When computing the successor on 
the first $2^{30}=1073741824$ natural 
numbers, there are in total
2381889348 calls to {\tt s} and {\tt s'}, 
averaging to 2.2183 per computation.
The same average for $100$ successor computations on
$5000$ bit random numbers oscillates around $2.22$.

\subsection{A few other average constant time, $O(log^*)$ worst case operations}
We will derive a few operations that inherit their complexity
from {\tt s} and {\tt s'}.

\subsubsection{Double and half}

Doubling a number {\tt db}
and reversing the {\tt db} operation ({\tt hf}) are
quite simple. For instance, {\tt db} proceeds
by adding a new counter for odd numbers
and incrementing the first counter for even ones. 

\begin{code}
db :: Cat a => a -> a
db x | e_ x  = e
db x | odd_ x = c (e,x)
db z = c (s x,y) where (x,y) = c' z
\end{code}

\begin{code}
hf :: Cat a => a -> a
hf x | e_ x = e
hf z | e_ x = y where (x,y) = c' z
hf z  = c (s' x,y) where (x,y) = c' z
\end{code}

\subsubsection{Power of 2 and its left inverse}
Note that such efficient implementations follow directly from
simple number theoretic observations. 

For instance, 
{\tt exp2}, computing an power of $2$ ,
has the following definition in terms of {\tt c} and {\tt s'} from which
it inherits its complexity up to a constant factor.
\begin{code}
exp2 :: Cat a => a -> a
exp2 x | e_ x = u
exp2 x = c (s' x, u)
\end{code}
The same applies to its left inverse {\tt log2}:
\begin{code}
log2 :: Cat a => a -> a
log2 x | u_ x = e
log2 x | u_ z = s y where (y,z) = c' x
\end{code}
\begin{prop}
The operations {\tt db}, {\tt hf}, {\tt exp2} and {\tt log2} are average constant time and are ${log}^*$ in the worst case.
\end{prop}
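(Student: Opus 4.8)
The plan is to exploit the fact that each of the four operations is defined \emph{without any recursive call to itself}: its body consists of a fixed number of the primitives {\tt e}, {\tt c}, {\tt c'}, {\tt e\_}, {\tt c\_}, {\tt odd\_}, {\tt even\_}, {\tt u} and {\tt u\_}, together with \emph{at most one} call to {\tt s} or {\tt s'}. Under the standing assumptions of the paper (a tree representation such as {\tt T} or {\tt M}, with parity information kept explicitly), every one of these primitives runs in $O(1)$ time, so the total cost of each operation equals, up to an additive $O(1)$, the cost of a single invocation of {\tt s} or {\tt s'}. The statement then follows at once from Proposition \ref{avg} and from the worst-case bound $O(log_2^*(n))$ already established for {\tt s} and {\tt s'}; this also matches the paper's own framing that these operations ``inherit their complexity from {\tt s} and {\tt s'}''.

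First I would isolate the unique successor/predecessor call in each definition. In {\tt db}, the {\tt e\_} and {\tt odd\_} branches contain none, while the remaining even, non-{\tt e} branch applies {\tt s} exactly once to the first component returned by {\tt c'}. Dually, {\tt hf} uses {\tt s'} at most once (only in its last branch, reconstructing {\tt c~(s'~x,y)}), {\tt exp2} uses {\tt s'} once in its non-{\tt e} branch, and {\tt log2} uses {\tt s} once. In each case the accompanying work is a bounded number of {\tt c}/{\tt c'} pairings and constant-time tests, which confirms the reduction claimed above.

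For the worst case, the argument I would give is that the value fed to {\tt s} or {\tt s'} never exceeds the input $n$ --- it is either $n$ itself (as in {\tt exp2}) or a strict subterm extracted by {\tt c'}; since $log_2^*$ is monotone, this single call costs $O(log_2^*(n))$ and hence so does the whole operation. For the average case, one call of average constant cost plus $O(1)$ bookkeeping is again constant on average. The hard part --- and really the only point requiring care --- is to verify that no branch secretly performs a \emph{second} or \emph{recursive} call to {\tt s}/{\tt s'}, and to stress that the $O(1)$ cost of {\tt c} and {\tt c'} is exactly what the tree-representation and explicit-parity assumptions buy us: on the bitstring instance {\tt N} or the parenthesis instance {\tt P} these pairings are themselves linear, and the constant-time conclusion would no longer hold.
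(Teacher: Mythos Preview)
Your proposal is correct and follows essentially the same approach as the paper: the paper's entire proof is the one-line observation that at most one call to {\tt s} or {\tt s'} is made in each definition, so these operations inherit the worst-case and average complexity of {\tt s} and {\tt s'}. Your write-up simply expands this by explicitly tracing the branches, noting the monotonicity of $\log^*$, and reiterating the tree-representation and explicit-parity assumptions---all accurate elaborations of the same core argument.
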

\begin{proof}
At most one call to {\tt s,s'} is made in each definition. Therefore these operations have the same worst and average complexity as {\tt s} and {\tt s'}.
\end{proof}
We illustrate their work on instances {\tt N}:
\begin{codex}
*GCat> map exp2 [0..14]
[1,2,4,8,16,32,64,128,256,512,1024,2048,4096,8192,16384]
*GCat> map log2 it
[0,1,2,3,4,5,6,7,8,9,10,11,12,13,14]
\end{codex}
More interestingly, a tall tower of exponents
that would overflow memory on instance {\tt N}, is easily supported
on instances {\tt T} and {\tt M} as shown below:
\begin{codex}
*GCat> exp2 (exp2 (exp2 (exp2 (exp2 (exp2 (exp2 E))))))
C (C (C (C (C (C E E) E) E) E) E) (C E E)
*GCat> m it
F [F [F [F [F [F [F []]]]]],F []]
*GCat> log2 (log2 (log2 (log2 (log2 (log2 (log2 it))))))
F []
\end{codex}
{\em
This example illustrates the main motivation for defining
arithmetic computation with the ``typical''
members of the Catalan family: their 
ability to deal with giant numbers.
}

Another average constant time / worst case $log^*$ algorithm is counting the trailing {\tt 0}s of a number (on instance $\T$):
\begin{code}
trailingZeros x | e_ x = e
trailingZeros x | odd_ x = e
trailingZeros x = s (fst (c' x))
\end{code}
This contrasts with the O(log(n) worst case performance to 
count them with a bitstring representation.

\section{Addition, subtraction and their mutually recursive helpers} \label{addsub}

We will derive in this section efficient addition and subtraction
that {\em work on one
run-length compressed block at a time}, rather than by
individual {\tt 0} and {\tt 1} digit steps.

\subsection{Multiplication by a power of {\tt 2}}

We start with the functions {\tt leftshiftBy}, {\tt leftshiftBy'} 
and  {\tt leftshiftBy}'' corresponding respectively
to $(\lambda x.2x)^n(k)$ , $(\lambda x.2x+1)^n(k)$ and $(\lambda x.2x+2)^n(k)$.

The function {\tt leftshiftBy} prefixes an odd number with
a block of $1$s and extends a block of $0$s by incrementing
their count.
\begin{code}
leftshiftBy :: Cat a => a -> a -> a
leftshiftBy x y | e_ x = y
leftshiftBy _ y | e_ y = e
leftshiftBy x y | odd_ y = c (s' x, y) 
leftshiftBy x v = c (add x y, z) where (y,z) = c' v
\end{code}
The function {\tt leftshiftBy'} is based on
equation (\ref{onk}).

\begin{equation}\label{onk}
(\lambda x.2x+1)^n(k)=2^n(k+1)-1
\end{equation}

\begin{code}
leftshiftBy' :: Cat a => a -> a -> a
leftshiftBy' x k = s' (leftshiftBy x (s k)) 
\end{code}
The function {\tt leftshiftBy'} is based on
equation (\ref{ink}) (see \cite{sac14} for a direct proof by induction).
\begin{equation}\label{ink}
(\lambda x.2x+2)^n(k)=2^n(k+2)-2
\end{equation}
\begin{code}
leftshiftBy'' :: Cat a => a -> a -> a
leftshiftBy'' x k = s' (s' (leftshiftBy x (s (s k))))
\end{code}

They are part of a {\em chain of mutually recursive functions} as
they are already referring to the
{\tt add} function.
Note also that instead of naively iterating, they implement a
more efficient algorithm, working
 ``one block at a time''. For instance, when detecting that its argument counts
a number of {\tt 1}s, {\tt leftshiftBy'} just increments that count.
As a result, the algorithm favors numbers with relatively few large blocks of
{\tt 0} and {\tt 1} digits. 

While not directly used in the addition operation it is interesting to
observe that division by a power of 2 can also be computed efficiently.

\subsection{An inverse operation: division by a power of 2}
The function {\tt rightshiftBy} goes over its argument {\tt y}
one block at a time, by comparing the size of the block and its argument
{\tt x} that is decremented after each block by the size of the block.
The local function {\tt f} handles the details, irrespectively of
the nature of the block, and stops when the
argument is exhausted.
More precisely, based on the result {\tt EQ, LT, GT} of the comparison,
{\tt f} either stops or, calls {\tt rightshiftBy} on the
the value of {\tt x} reduced by the size of the block {\tt a' = s a}.

\begin{code}
rightshiftBy :: Cat a => a -> a -> a
rightshiftBy x y | e_ x = y
rightshiftBy _ y | e_ y = e
rightshiftBy x y = f (cmp x a')  where
  (a,b) = c' y
  a' = s a
  f LT = c (sub a x,b) 
  f EQ = b
  f GT = rightshiftBy (sub  x a') b
\end{code}

\subsection{Addition optimized for numbers built from a few large blocks of 0s and 1s}
We are now ready to define addition. The base cases are
\begin{code}
add :: Cat a => a -> a -> a
add x y | e_ x = y
add x y | e_ y = x
\end{code}
In the case when both terms represent even numbers, the two blocks add up
to an even block of the same size. Note the use of {\tt cmp} 
and {\tt sub} in helper function {\tt f} to
trim off the larger block such that we can operate 
on two  blocks of equal size.
\begin{code}
add x y |even_ x && even_ y = f (cmp a b) where
  (a,as) = c' x
  (b,bs) = c' y
  f EQ = leftshiftBy (s a) (add as bs)
  f GT = leftshiftBy (s b) (add (leftshiftBy (sub a b) as) bs)
  f LT = leftshiftBy (s a) (add as (leftshiftBy (sub b a) bs))
\end{code}
In the case when the first term is even and the second odd, the two blocks add up
to an odd block of the same size.  
\begin{code}
add x y |even_ x && odd_ y = f (cmp a b) where
  (a,as) = c' x
  (b,bs) = c' y
  f EQ = leftshiftBy' (s a) (add as bs)
  f GT = leftshiftBy' (s b) (add (leftshiftBy (sub a b) as) bs)
  f LT = leftshiftBy' (s a) (add as (leftshiftBy' (sub b a) bs))
\end{code}
In the case when the second term is even and the first odd the two blocks 
also add up
to an odd block of the same size.
\begin{code}
add x y |odd_ x && even_ y = add y x
\end{code}
In the case when both terms represent odd numbers,
we use the  identity (\ref{oplus}):
\begin{equation} \label{oplus}
{(\lambda x.2x+1)}^k(x) + {(\lambda x.2x+1)}^k(y) = {(\lambda x.2x+2)}^k(x+y) 
\end{equation}
\begin{code}
add x y | odd_ x && odd_ y = f (cmp a b) where
  (a,as) = c' x
  (b,bs) = c' y
  f EQ =  leftshiftBy'' (s a) (add as bs)
  f GT =  leftshiftBy'' (s b)  (add (leftshiftBy' (sub a b) as) bs)
  f LT =  leftshiftBy'' (s a)  (add as (leftshiftBy' (sub b a) bs))
\end{code}
Note the presence of the comparison operation {\tt cmp} also
part of our chain of mutually recursive operations.
Note also the local function {\tt f} that in each case ensures that a block of the same
size is extracted, depending on which of the two operands {\tt a} or {\tt b} is larger.

Details of implementation for subtraction 
are quite similar to those of the addition. The curious reader
can find the implementation and the literate programming explanations of the
complete set of arithmetic operations with the type class {\tt Cat} in
our arxiv draft at \cite{arxiv_cats}.

\subsection*{Comparison}
The comparison operation {\tt cmp}
provides a total order (isomorphic to that on $\N$) on our generic type $\C$.
It relies on {\tt bitsize} computing the number of binary digits 
constructing a term in $\C$,
also part of our mutually recursive functions, to be defined later.

We first observe that only terms of the same bitsize need detailed comparison,
otherwise the relation between their bitsizes is enough, {\em recursively}.
More precisely, the following holds:
\begin{prop} \label{bitineq}
Let {\tt bitsize} count the number of digits of a base-2 number,
with the convention that it is {\tt 0} for {\tt 0}.
Then {\tt bitsize}$(x) <${\tt bitsize}$(y) \Rightarrow x<y$.
\end{prop}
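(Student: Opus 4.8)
The plan is to reduce the implication to the elementary two-sided bound relating the magnitude of a natural number to its number of binary digits. First I would record the auxiliary fact that a number $n$ with $\mathtt{bitsize}(n)=k$ satisfies
\[
2^{k-1}\le n\le 2^{k}-1\quad\text{when } k\ge 1,
\]
while $n=0$ holds exactly when $k=0$. This follows directly from the positional reading in equation (\ref{bin}): the highest digit equals $1$ and contributes the weight $2^{k-1}$, which gives the lower bound, whereas the total of all $k$ digit-weights is at most $\sum_{i=0}^{k-1}2^{i}=2^{k}-1$, which gives the upper bound. The same bound is equally available by a one-line induction on $k$, so I would not dwell on its justification.

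With the bound in hand, the statement is a short inequality chain. Suppose $\mathtt{bitsize}(x)=j$ and $\mathtt{bitsize}(y)=k$ with $j<k$. If $j=0$, then $x=0$, and since $k\ge 1$ we have $y\ge 2^{k-1}\ge 1>0=x$. If $j\ge 1$, then from $j\le k-1$ we get $2^{j}\le 2^{k-1}$, and therefore
\[
x\le 2^{j}-1<2^{j}\le 2^{k-1}\le y .
\]
In either case $x<y$, which is exactly the claimed implication.

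I expect no genuine obstacle here, as the content is entirely elementary; the only points that require care are bookkeeping. One must treat the case $j=0$ separately, since the lower bound $2^{k-1}$ is meaningful only for $k\ge 1$ and the upper bound for $x$ degenerates to $x=0$. One must also respect the off-by-one induced by the convention $\mathtt{bitsize}(0)=0$ paired with the upper bound $2^{k}-1$, which is precisely what makes the strict inequality $2^{j}-1<2^{j}$ the crucial step. Since Proposition \ref{bitineq} is invoked only \emph{recursively} inside {\tt cmp} — to settle the order of two terms of unequal {\tt bitsize} before any digit-by-digit comparison is attempted — the magnitude bound as stated is all that is needed, and no property of the concrete Catalan instance beyond the defining meaning of {\tt bitsize} enters the argument.
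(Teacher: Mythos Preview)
Your argument is correct. The paper's own proof takes a different, one-line route: it simply observes that enumerating the natural numbers in order makes {\tt bitsize} a non-decreasing function, from which the implication follows by contraposition (together with the trivial remark that unequal bitsizes force $x\neq y$, so $x\le y$ sharpens to $x<y$). Your approach instead pins down the explicit two-sided bound $2^{k-1}\le n\le 2^{k}-1$ and chains the inequalities directly. The paper's version is shorter and avoids any case split, while yours is more explicit and yields quantitative information (the actual range of values with a given bitsize) that the monotonicity observation does not. Either is entirely adequate for the use made of the proposition inside {\tt cmp}.
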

\begin{proof}
Observe that their lexicographic enumeration ensures that the
bitsize of base-2 numbers is a non-decreasing function.
\end{proof}

The comparison operation also proceeds one block at a time, and it
also takes some inferential shortcuts, when possible.
\begin{code}
cmp :: Cat a=> a->a->Ordering
cmp x y | e_ x && e_ y = EQ
cmp x _ | e_ x = LT
cmp _ y | e_ y = GT
cmp x y | u_ x && u_ (s' y) = LT
cmp  x y | u_ y && u_ (s' x) = GT
\end{code}
For instance, it is easy to see that comparison of {\tt x} and {\tt y} can be
reduced to comparison of bitsizes when they are distinct. 
Note that {\tt bitsize},
to be defined later,
is part of the chain of our mutually recursive functions.
\begin{code}
cmp x y | x' /= y'  = cmp x' y' where
  x' = bitsize x
  y' = bitsize y
\end{code}
When bitsizes are equal, a more elaborate
comparison needs to be done, delegated to function {\tt compBigFirst}.
\begin{code}  
cmp xs ys =  compBigFirst True True (rev xs) (rev ys) where
  rev = from_list . reverse . to_list
\end{code}
The function {\tt compBigFirst} compares two terms known to have the same
{\tt bitsize}. It works on reversed (highest order digit first) variants,
computed by {\tt reverse} and it takes  advantage of the
block structure using the following proposition:
\begin{prop}
Assuming two terms of the same bitsizes, the one with $1$ as its first before the
highest order digit, is larger than the one with $0$ as its first before the highest order digit.
\end{prop}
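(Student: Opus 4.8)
The plan is to reduce the claim to the elementary place-value fact that a single binary digit in a given position outweighs every digit strictly below it. Since the two terms, call them $x$ and $y$, have the same bitsize $L$, both carry their highest order digit (the bit in position $L-1$) equal to $1$; this is forced by the convention that the leading digit of a base-$2$ number is $1$, and it is exactly the situation under which \verb~compBigFirst~ is invoked. The phrase ``first before the highest order digit'' then refers to the next bit down, in position $L-2$. I would therefore set up the comparison as follows: $x$ carries a $1$ in position $L-2$ while $y$ carries a $0$ there, and both agree (trivially) on position $L-1$.

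The key step is the bound on the lower-order tail. Writing the numbers in positional form, $x \geq 2^{L-1} + 2^{L-2}$ because it has ones in positions $L-1$ and $L-2$, whereas
\begin{equation*}
y = 2^{L-1} + \sum_{i=0}^{L-3} b_i\, 2^i \leq 2^{L-1} + (2^{L-2}-1) < 2^{L-1} + 2^{L-2} \leq x .
\end{equation*}
The middle inequality is just the geometric-sum identity $\sum_{i=0}^{L-3} 2^i = 2^{L-2}-1$, which encodes the intuition that all the bits below position $L-2$ together cannot match the value of a single $1$ in position $L-2$. Hence $x > y$, as claimed. In the block language of \verb~compBigFirst~ this says precisely that a top block $1^{k}$ with $k \geq 2$ beats a top block $1^{1}$ immediately followed by a $0$-block, and these are exactly the two cases that distinguish a $1$ from a $0$ in the second digit.

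Finally I would tie this into the recursion of \verb~compBigFirst~. Proposition \ref{bitineq} already disposes of the case of unequal bitsizes, so the interesting work happens on equal-bitsize operands read most-significant digit first: when the leading digits agree one strips the common prefix (equal-length strings leave equal-length suffixes, so the hypothesis of equal bitsize is preserved) and recurses, and the present proposition supplies the decisive comparison at the first position where the two differ. The main obstacle is bookkeeping rather than mathematics: one must verify that the ``first differing position'' is correctly located within the run-length block structure, so that comparing a $1$-block against a $0$-block really does correspond to a differing bit in the same position of two equal-length bitstrings. Once that correspondence is pinned down, the geometric-sum bound above closes the argument immediately.
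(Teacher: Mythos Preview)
Your argument is correct. The place-value inequality you write down,
\[
y \;\le\; 2^{L-1} + \sum_{i=0}^{L-3} 2^i \;=\; 2^{L-1} + 2^{L-2} - 1 \;<\; 2^{L-1} + 2^{L-2} \;\le\; x,
\]
is exactly what is needed, and your reading of ``first before the highest order digit'' as the bit in position $L-2$ is the intended one.

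The paper's own proof is a single sentence: it simply observes that equal-length binary strings, read most-significant digit first (i.e.\ after applying \texttt{rev}), are numerically ordered by the lexicographic order with $0<1$. Your proof is the same idea made explicit: the geometric-sum bound is precisely the arithmetic content of ``lexicographic order on equal-length strings agrees with numerical order.'' So the approaches coincide; you have just unpacked the one-line appeal into the underlying inequality. Your additional paragraphs about how the proposition feeds the recursion of \texttt{compBigFirst} and about locating the first differing position inside the block structure are accurate and useful, but they go beyond what the proposition itself asserts; the paper leaves that bookkeeping implicit in the code.
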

\begin{proof}
Observe that little-endian 
numbers obtained by applying the function {\tt rev}
are lexicographically ordered with $0<1$.
\end{proof}

As a consequence, {\tt cmp} only recurses when {\em identical} blocks
lead the sequence of blocks, otherwise it infers the {\tt LT} or {\tt GT}
relation. 
\begin{code}
  compBigFirst _ _ x y | e_ x && e_ y = EQ
  compBigFirst False False x y = f (cmp a b) where
    (a,as) = c' x
    (b,bs) = c' y
    f EQ = compBigFirst True True as bs
    f LT = GT
    f GT = LT   
  compBigFirst True True x y = f (cmp a b) where
    (a,as) = c' x
    (b,bs) = c' y
    f EQ = compBigFirst False False as bs
    f LT = LT
    f GT = GT
  compBigFirst False True x y = LT
  compBigFirst True False x y = GT
\end{code}
The following examples illustrate the agreement of {\tt cmp} with
the usual order relation on $\N$.
\begin{codex}
*GCat> cmp 5 10
LT
*GCat> cmp 10 10
EQ
*GCat> cmp 10 5
GT
\end{codex}

Note that the complexity of the comparison operation is proportional to the size
of the (smaller) of the two trees.

The function {\tt bitsize}, last in our chain of mutually recursive functions, 
 computes  the number of digits, except that we
define it as {\tt e} for constant function {\tt e} (corresponding to {\tt 0}). 
It works by summing up the counts of {\tt 0} and {\tt 1} digit
blocks composing a tree-represented natural number.
\begin{code}
bitsize :: Cat a => a -> a
bitsize z | e_ z = z
bitsize  z = s (add x (bitsize y)) where (x,y) = c' z
\end{code}

It follows that the  base-2 integer logarithm is  computed as
\begin{code}
ilog2 :: Cat a => a->a 
ilog2 = s' . bitsize
\end{code}

\subsubsection{Subtraction}
The code for the subtraction function {\tt sub} is similar to that for addition:
\begin{code}
sub :: Cat a => a -> a -> a
sub x y | e_ y = x
sub x y | even_ x && even_ y = f (cmp a b) where
  (a,as) = c' x
  (b,bs) = c' y
  f EQ = leftshiftBy (s a) (sub as bs)
  f GT = leftshiftBy (s b) (sub (leftshiftBy (sub a b) as) bs)
  f LT = leftshiftBy (s a) (sub as (leftshiftBy (sub b a) bs))
\end{code}
The case when both terms represent {\tt 1} blocks the result
is a {\tt 0} block:
\begin{code}
sub x y | odd_ x && odd_ y = f (cmp a b) where
  (a,as) = c' x
  (b,bs) = c' y
  f EQ = leftshiftBy (s a) (sub  as bs)
  f GT = leftshiftBy (s b) (sub (leftshiftBy' (sub a b) as) bs)
  f LT = leftshiftBy (s a) (sub as (leftshiftBy' (sub b a) bs))
\end{code}
The case when the first block is {\tt 1} and the second is
a {\tt 0} block is a {\tt 1} block:
\begin{code}
sub x y | odd_ x && even_ y = f (cmp a b) where
  (a,as) = c' x
  (b,bs) = c' y
  f EQ = leftshiftBy' (s a) (sub as bs)
  f GT = leftshiftBy' (s b) (sub (leftshiftBy' (sub a b) as) bs)
  f LT = leftshiftBy' (s a) (sub as (leftshiftBy (sub b a) bs)) 
\end{code}
Finally, when the first block is {\tt 0} and the second is {\tt 1}
an identity dual to (\ref{oplus}) is used:
\begin{code}
sub x y | even_ x && odd_ y = f (cmp a b) where
  (a,as) = c' x
  (b,bs) = c' y  
  f EQ = s (leftshiftBy (s a) (sub1 as bs))
  f GT = s (leftshiftBy (s b) (sub1 (leftshiftBy (sub a b) as) bs))
  f LT = s (leftshiftBy (s a) (sub1 as (leftshiftBy' (sub b a) bs)))

  sub1 x y = s' (sub x y)  
\end{code}
Note that these algorithms collapse to the ordinary binary addition and subtraction most
of the time, given that the average size of a block of contiguous {\tt 0}s or {\tt 1s} 
is
{\tt 2} bits (as shown in Prop. \ref{avg}), 
so their average complexity is within constant factor of their ordinary counterparts.

On the other hand,
as they are limited by the representation size of the operands
rather than their bitsize,
when compared with their bitstring counterparts,
these algorithms
 favor deeper trees made of large blocks, representing
 giant ``towers of exponents''-like numbers by working (recursively) one block at
 a time rather than 1 bit at a time, resulting in
 possibly super-exponential gains on them.

The following examples illustrate the agreement with their usual counterparts:
\begin{codex}
*GCat> map (add 10) [0..15]
[10,11,12,13,14,15,16,17,18,19,20,21,22,23,24,25]
*GCat> map (sub 15) [0..15]
[15,14,13,12,11,10,9,8,7,6,5,4,3,2,1,0]
\end{codex}

\section{Algorithms for advanced arithmetic operations} \label{muldiv}

\subsection{Multiplication, optimized for large blocks of {\tt 0}s and {\tt 1}s} \label{mul}

Devising a similar optimization as for {\tt add} and {\tt sub} for multiplication
({\tt mul}) is actually easier.

After making sure that the recursion is on its smaller argument,
{\tt mul} delegates its work to {\tt mul1} which uses the decomposition of
the first argument based on equation ({\bf \ref{cnat}}).
When the first term represents an even number, {\tt mul1} applies the 
{\tt leftshiftBy} operation corresponding to $2^{x+1}zy$, 
otherwise it derives a similar operation
from the representation $(2^{x+1}(z+1)-1)y$.
\begin{code}
mul :: Cat a => a -> a -> a
mul x y = f (cmp x y) where
  f GT = mul1 y x
  f _ = mul1 x y

mul1 :: Cat a => a -> a -> a  
mul1 x _ | e_ x = e
mul1 x y | u_ x = y
mul1 a y | even_ a =  
  leftshiftBy (s x) (mul1 z y) where (x,z) = c' a
mul1 a y | odd_ a = 
  sub (leftshiftBy (s x) (mul1 (s z) y)) y where (x,z) = c' a
\end{code}
Note that when the operands are composed of large blocks of alternating
{\tt 0} and {\tt 1} digits, the algorithm is quite efficient as it
works (roughly) in time depending on the the number of blocks in its
first argument rather
than its  number of digits.  
The following example illustrates a blend of 
arithmetic operations benefiting from complexity reductions
on giant tree-represented numbers:
\begin{codex}
*GCat> let term1 = 
    sub (exp2 (exp2 (t 12345))) (exp2 (t 6789))
*GCat> let term2 = 
    add (exp2 (exp2 (t 123))) (exp2 (t 456789))
*GCat> bitsize (bitsize (mul term1 term2))
C E (C E (C E (C (C E (C E E)) 
   (C (C E (C E (C E E))) (C (C E E) E)))))
*GCat> n it
12346
\end{codex}

\subsection{Power}

After specializing our multiplication for a  squaring operation,
\begin{code}
square x = mul x x
\end{code}
we can implement a simple but efficient
``power by squaring'' operation for $x^y$, as follows:
\begin{code}
pow _ x | e_ x = c (x,x)
pow a b | even_ a = c (s' (mul (s x) b),ys) where
  (x,xs) = c' a
  ys = pow xs b
pow a b | even_ b = pow (superSquare y a) ys where
  (y,ys) = c' b
  superSquare a x | e_ a = square x
  superSquare k x = square (superSquare (s' k) x)
pow x y = mul x (pow x (s' y))
\end{code}

It works well with fairly large numbers, by also
benefiting from efficiency of multiplication on terms
with large blocks of  {\tt 0} and {\tt 1} digits:
\begin{codex}
*GCat> n (bitsize (pow (t 10) (t 100)))
333
*GCat> pow (t 32) (t 10000000)
C (C (C E (C (C E E) E)) (C (C (C E E) (C E E)) (C (C (C E E) E) 
  (C E (C E (C E (C (C (C E E) (C E E)) (C E (C E E))))))))) (C E E)
\end{codex}


\subsection{General division}

An algorithm derived from the well known long division 
is  given here. While it
matches its bitstring equivalent asymptotically,
it does not provide the same complexity gains as,
for instance, multiplication, addition or subtraction. 
\begin{code}
div_and_rem :: Cat a => a -> a -> (a, a)
div_and_rem x y | LT == cmp x y = (e,x)
div_and_rem x y | c_ y  = (q,r) where 
  (qt,rm) = divstep x y
  (z,r) = div_and_rem rm y
  q = add (exp2 qt) z
\end{code}
The function {\tt divstep} implements a step of
the division operation.
\begin{code}    
  divstep n m = (q, sub n p) where
    q = try_to_double n m e
    p = leftshiftBy q m    
\end{code}
The function {\tt try\_to\_double} doubles its
second argument  while smaller than its first
argument and returns the number of steps it took.
This value will be used by {\tt divstep}
when applying the {\tt leftshiftBy} operation.
\begin{code}    
  try_to_double x y k = 
    if (LT==cmp x y) then s' k
    else try_to_double x (db y) (s k)   
\end{code}
By specializing {\tt div\_and\_rem} we
derive division and remainder.
\begin{code}
divide :: Cat b => b -> b -> b          
divide n m = fst (div_and_rem n m)

remainder :: Cat b => b -> b -> b
remainder n m = snd (div_and_rem n m)
\end{code}

The following examples illustrate the agreement with their usual counterparts:
\begin{codex}
*GCat> divide 26 3
8
*GCat> remainder 26 3
2
\end{codex}

\section{Specialized Arithmetic Operations and Primality Tests} \label{spec}

We describe in this section a number of special purpose arithmetic
operations showing the practical usefulness of our number representation.

\subsection{Integer Square Root}
A fairly efficient integer square root, using Newton's method,
is implemented as follows:
\begin{code}
isqrt x | e_ x = x
isqrt n = if cmp (square k) n==GT then s' k else k where
  two = c (e,(c (e,e)))
  k=iter n
  iter x = if cmp (absdif r x)  two == LT
    then r
    else iter r where r = step x
  step x = divide (add x (divide n x)) two
absdif x y = if LT == cmp x y then sub y x else sub x y
\end{code}

The following examples illustrate the agreement with their usual counterparts:
\begin{codex}
*GCat> isqrt (t 101)
C E (C E (C E (C E E)))
*GCat> n it
10
\end{codex}

\subsection{Modular Power}
The modular power operation $x^y (mod~m)$ can be optimized to avoid the creation
of large intermediate results, by combining ``power by squaring''
and pushing the modulo operation inside the inner function {\tt modStep}.
\begin{code}
modPow m base expo = modStep expo one base where
  one = c (e,e)
  modStep x r b | x == one = (mul r b) `remainder` m
  modStep x r b | odd_ x =
    modStep (hf (s' x)) (remainder (mul r b) m)
                        (remainder (square b)  m)
  modStep x r b = modStep (hf x) r (remainder (square b) m)
\end{code}

The following examples illustrate the correctness of these operations:
\begin{codex}
*GCat> modPow (t 10) (t 3) (t 3)
C (C E (C E E)) E
*GCat> n it
7
\end{codex}

\subsection{Lucas-Lehmer Primality Test for Mersenne Numbers}
The Lucas-Lehmer primality test
has been used for the discovery of all the
record holder largest known prime numbers of the form $2^p-1$ with $p$ prime
in the last few years. 
It is based on iterating $p-2$ times the function $f(x)=x^2-2$, starting from $x=4$.
Then $2^p-1$ is prime if and only if the result modulo $2^p-1$ is $0$, as proven
in \cite{bruce93}.
The function {\tt ll\_iter} implements this iteration.
\begin{code}
ll_iter k n m |e_ k = n
ll_iter k n m = fastmod y m where
   x = ll_iter (s' k) n m
   y = s' (s' (square x))
\end{code}
It relies on the function {\tt fastmod} which provides
a specialized fast computation of $k~modulo~(2^p-1)$.
\begin{code}
fastmod k m | k == s' m = e
fastmod k m | LT == cmp k m = k
fastmod k m = fastmod (add q r) m where
  (q,r) = div_and_rem k m
\end{code}
Finally the Lucas-Lehmer primality test is implemented as follows:
\begin{code}
lucas_lehmer p | p == s (s e) = True
lucas_lehmer p = e ==  (ll_iter p_2 four m) where
  p_2 = s' (s' p)
  four = c (c (e,e),c (e,e))
  m  = exp2 p
\end{code}
We illustrate its use for detecting a few Mersenne primes:
\begin{codex}
*GCat> map n (filter lucas_lehmer (map t [3,5..31]))
[3,5,7,13,17,19,31]
*GCat> map (\p->2^p-1) it
[7,31,127,8191,131071,524287,2147483647]
\end{codex}
Note that the last line contains the Mersenne primes corresponding to  $2p+1$.

 \subsection{Miller-Rabin Probabilistic Primality Test}
Let $\nu_2(x)$ denote the {\em dyadic valuation of x}, i.e., the largest
exponent of 2 that divides x. The function {\tt dyadicSplit}
defined by equation (\ref{dia})
\begin{equation} \label{dia}
dyadicSplit(k) = {(k,{k \over 2^{\nu_2(k)}})}
\end{equation}
can be implemented as an average constant time operation as:
\begin{code}
dyadicSplit z | odd_ z = (e,z)
dyadicSplit z | even_ z = (s x, s xs) where
  (x,xs) = c' (s' z)
\end{code}

After defining a sequence of {\tt k} random natural numbers in an interval
\begin{code}
randomNats :: (Cat a) => Int -> Int -> a -> a -> [a]
randomNats seed k smallest largest =  map view ns  where
  ns = take k (randomRs
    (n smallest,n largest) (mkStdGen seed))
\end{code}

we are ready to implement the function {\tt oddPrime} that
runs {\tt k} tests and concludes primality
with probability $1-{1 \over 4^k}$ if all {\tt k}
calls to function {\tt strongLiar} succeed.
\begin{code}
oddPrime :: Cat a => Int -> a -> Bool
oddPrime k m = all strongLiar as where
  one = s e
  two = s one
  m' = s' m
  as = randomNats k k two m'
  (l,d) = dyadicSplit m'

  strongLiar a = (x == one || (any (== m') (squaringSteps l x))) where
    x = modPow m a d

    squaringSteps x _ | e_ x = []
    squaringSteps l x = x:squaringSteps (s' l)
      (remainder (square x) m)
\end{code}
Note that we use {\tt dyadicSplit} to find a pair {\tt (l,d)} such that {\tt l}
is the largest power of {\tt t 2} dividing
the predecessor {\tt m'} of the suspected prime {\tt m}.
The function {\tt strongLiar} checks, for a random base {\tt a}, a
condition that primes (but possibly also a few composite numbers)
verify.

Finally {\tt isProbablyPrime} handles the case of even numbers and
calls {\tt oddPrime} with the parameter specifying the number of tests,
{\tt k=42}.
\begin{code}
isProbablyPrime x | x==two  = True where two = s (s e)
isProbablyPrime x | even_ x = False
isProbablyPrime p = oddPrime 42 p
\end{code}
The following example illustrates the correct behavior of the
algorithm on a the interval {\tt [2..100]}.
\begin{codex}
*GCat> map n (filter isProbablyPrime (map t [2..100]))
[2,3,5,7,11,13,17,19,23,29,31,37,41,43,47,53,59,61,67,71,73,79,83,89,97]
\end{codex}

\section{Boolean Operations on Tree-represented Bitvectors} \label{bool}
We  implement bitvector operations (also seen
as efficient bitset operations) to work ``one block of
binary digits at a time'', to facilitate
their use on large but sparse boolean formulas
involving a large numbers of variables. One will be able
to evaluate such formulas ``all value-combinations at a time''
when represented as bitvectors of size $2^{2^n}$.
Note that such operations will be tractable with
our trees, provided that they have a
relatively small
representation complexity,
despite their large bitsize.

\subsection{Bitwise Operations One Block of Digits at a time}
We implement a generic {\tt bitwise} operation that takes
a boolean function {\tt bf} as its first parameter.

First, when an argument is {\tt F []}, corresponding to {\tt 0}
the behavior is derived from that of the boolean function {\tt bf}.
\begin{code}
bitwise :: (Cat a) => (Bool -> Bool -> Bool) -> a -> a -> a
bitwise bf x y | e_ x && e_ y = x
bitwise bf x y | e_ x = if bf False True then y else x
bitwise bf x y | e_ y = if bf True False then x else y
\end{code}

Next, the parities of the arguments {\tt px} and {\tt py}
are used to derive the parity of the result {\tt pz}, by applying
the boolean function {\tt bz}.
\begin{code}
bitwise bf x y  = f (cmp a b) where
  (a,as) = c' x
  (b,bs) = c' y
  px = odd_ x
  py = odd_ y
  pz = bf px py
\end{code}
Based on the parity {\tt pz} the local function {\tt f}
(also parameterized by the result of the comparison between
arguments {\tt x} and {\tt y}) is called.
\begin{code}
  f EQ = fApply bf pz (s a) as bs
  f GT = fApply bf pz (s b) (fromB px (sub a b) as) bs
  f LT = fApply bf pz (s a) as (fromB py (sub b a) bs)
\end{code}
The function {\tt f} calls {\tt fromB} to derive from the parities
{\tt px} and {\tt py} the appropriate left-shifting operation.
\begin{code}
  fromB False = leftshiftBy
  fromB True  = leftshiftBy'
\end{code}

Finally, the function {\tt f} calls the helper function {\tt fApply},
which, depending on the expected parity of the result {\tt pz}, applies the appropriate
left-shift operation to the result of the recursive application of {\tt bitwise} to
the remaining blocks of digits {\tt u} and {tt v}.
\begin{code}
  fApply bf False k u v =  leftshiftBy k (bitwise bf u v)
  fApply bf True k u v =  leftshiftBy' k (bitwise bf u v)
\end{code}

The actual bitwise operations are obtained by parameterizing the generic {\tt bitwise}
function with the appropriate Haskell boolean functions:
\begin{code}
bitwiseOr x y= bitwise (||) x y

bitwiseXor  x y = bitwise (/=)  x y

bitwiseAnd  x y = bitwise (&&)  x y
\end{code}

Bitwise negation (requiring the additional parameter {\tt k}
to specify the intended bitlength of the operand)
corresponds to the complement w.r.t. the
``universal set'' of all
natural numbers up to $2^k-1$.
It is defined as usual, by subtracting from
the ``bitmask'' corresponding to $2^k-1$:
\begin{code}
bitwiseNot k x = sub y x where y = s' (exp2 k)
\end{code}
The function {\tt bitwiseAndNot}
combines {\tt bitwiseOr}, {\tt bitwiseNot}
the usual way, except that it uses the helper
function {\tt bitsOf} to ensure enough mask bits
are made available when negation is applied.
\begin{code}
bitwiseAndNot x y = bitwiseNot l d  where
  l = max2 (bitsOf x) (bitsOf y)
  d = bitwiseOr (bitwiseNot l x) y
\end{code}
The function {\tt bitsOf} adapts our definition
for {\tt bitsize} to compute the number of bits of a bitvector (considering
0 to be 1 bit).
\begin{code}
bitsOf x | e_ x = s x
bitsOf x = bitsize x
\end{code}
The following example illustrates that
our bitwise operations can be efficiently applied
to giant numbers:
\begin{codex}
*GCat> bitwiseXor (s (exp2 (exp2 (m 12345))))  (s' (exp2 (exp2 (m 6789))))
F [F [],F [F [],F [F [F []],F [F [F []],F []],F [],F [],F [],F [],
  F [F []]]],F [F [F [F []],F [],F [F [F []]],F [],F [],F [],F [],
  F [F []]],F [],F [F [],F [],F [F []],F [F []],F [],F [F []],F [],
  F [],F [],F []]],F []]
\end{codex}
Note that while the size of the term representing
this result is {\tt 46} {\tt F} nodes the bitsize of
the result is $2^{12346}$ showing clearly that such an operation is intractable
with a bitstring representation.

\subsection{Boolean Formula Evaluation}
Besides definitions for the bitwise boolean functions, we also need
definitions of the projection variables
$var(n,k)$ corresponding to column $k$ of a truth table, for a function with $n$
variables.
A compact formula for them, as given in \cite{knuth_boolean} or \cite{synasc12b}, is
\begin{equation}
var(n,k)=({{2^{2^n}-1}) ~/~ ({2^{2^{n-k-1}}+1}})
\end{equation}
However, instead of doing the division, one can compute them
as a concatenation of alternating blocks of $1$ and $0$ bits
to take advantage of our efficient block operations.
\begin{code}
var n k = repeatBlocks nbBlocks blockSize mask where
  k' = s k
  nbBlocks = exp2 k'
  blockSize = exp2 (sub n k')
  mask = s' (exp2 blockSize)
\end{code}
The alternating blocks are put together by the function
{\tt repeatBlocks} that shifts to the left
by the size of a block, at each step, and
adds the {\tt mask}
made of $2^{n-k}$ ones, at each even step.
\begin{code}
  repeatBlocks x _ _ | e_ x = x
  repeatBlocks k l mask =
   if odd_ k then r else add mask r where
    r = leftshiftBy l (repeatBlocks (s' k) l mask)
\end{code}

The following examples illustrate these operations:
\begin{codex}
*GCat> map n (map (var (t 3)) (map t [0..2]))
[15,51,85]
*GCat> map n (map (var (t 4)) (map t [0..3]))
[255,3855,13107,21845]
*GCat> map n (map (var (t 5)) (map t [0..4]))
[65535,16711935,252645135,858993459,1431655765]
\end{codex}

The following  example illustrates the evaluation of
a boolean formula in conjunctive normal form (CNF).
The mechanism is usable as a simple satisfiability
or tautology tester, for formulas resulting in
possibly large but sparse or
dense, low structural complexity
bitvectors.
\begin{code}
cnf tf = andN (map orN cls) where
  cls = [[v0',v1',v2],[v0,v1',v2],
         [v0',v1,v2'],[v0',v1',v2'],[v0,v1,v2]]

  v0 = var (tf 3) (tf 0)
  v1 = var (tf 3) (tf 1)
  v2 = var (tf 3) (tf 2)

  v0' = bitwiseNot (exp2 (tf 3)) v0
  v1' = bitwiseNot (exp2 (tf 3)) v1
  v2' = bitwiseNot (exp2 (tf 3)) v2

  orN (x:xs) = foldr bitwiseOr x xs
  andN (x:xs) = foldr bitwiseAnd x xs
\end{code}
The execution of the function {\tt cnf} evaluates the formula,
the result corresponding to bitvector {\tt 88 = [0,0,0,1,1,0,1,0]}.
\begin{codex}
*GCat> cnf t
C (C E (C E E)) (C (C E E) (C E (C E E)))
*GCat> cnf m
F [F [F [],F []],F [F []],F [],F []]
*GCat> cnf n
88
\end{codex}

\section{Representing Sets}

\subsection{The Bijection between sequences and sets}

Increasing sequences provide a canonical representation for
sets of natural numbers.
While finite sets and finite lists of elements of $\N$
share a common representation {\tt [N]}, sets 
are subject to the implicit constraint that their
ordering is immaterial.
This suggest that a set like $[4,1,9,3]$ could be
represented canonically as a sequence by first ordering it as $[1,3,4,9]$ and
then computing the differences between consecutive elements i.e.
$[x_0,x_1 \ldots x_i, x_{i+1} \ldots] \rightarrow [x_0,x_1-x_0, \ldots
x_{i+1}-x_i \ldots]$.
This gives $[1,2,1,5]$, with
the first element $1$ followed by the increments $[2,1,5]$

Therefore, {\em incremental sums}, computed with Haskell's {\tt scanl}, are used to transform arbitrary
lists to canonically represented
sets of natural numbers, inverted by {\em pairwise differences} computed
using {\tt zipWith}.
\begin{code}   
list2set [] = []
list2set (n:ns) = scanl add n ns

set2list [] = []
set2list (m:ms) = m : zipWith sub ms (m:ms)
\end{code}

\subsection{The bijections between natural numbers and sets of natural numbers}
By composing with natural number-to-list bijections, we obtain bijections
to sets of natural numbers.
\begin{code}
to_set x = list2set (to_list x)

from_set x = from_list (set2list x)
\end{code}
As the following example shows, trees 
offer a significantly more compact
representation of sparse sets than conventional
binary numbers.
\begin{codex}
*GCat> n (bitsize (from_set (map t [42,1234,6789])))
6792
*GCat> n (catsize (from_set (map t [42,1234,6789])))
33
\end{codex}
Note that a similar compression occurs for sets of natural numbers 
 with only a few elements missing (that we call {\em dense sets}), as
they have the same representation 
size as their sparse complement.
\begin{codex}
*GCat> n (catsize (from_set (map t ([1,3,5]++[6..220]))))
438
*GCat> n (bitsize (from_set (map t ([1,3,5]++[6..220]))))
438
\end{codex}
The following holds:
\begin{prop}
These encodings/decodings of lists and sets  as trees
are size-proportionate i.e., their representation sizes are within constant factors.
\end{prop}

\section{Representation complexity} \label{stru}

While a precise average complexity analysis of our algorithms is beyond the scope of this paper, arguments similar to those about the average behavior of {\tt s} and {\tt s'} can be carried out to prove that for all our operations, {\em their average complexity matches their traditional counterparts}, using the fact, shown in the proof of Prop. \ref{avg}, 
that the average size of a  block of contiguous {\tt 0} or {\tt 1} 
bits is at most {\tt 2}.

\subsection{Complexity as representation size}
To evaluate the best and worst case space requirements
of our number representation, 
we introduce here a measure of {\em representation complexity}, defined
by the function {\tt catsize} that counts the non-empty nodes of an object of
type $\C$.
\begin{code}
catsize :: Cat a => a -> a
catsize z | e_ z = z
catsize  z = s (add (catsize x) (catsize y)) where (x,y) = c' z
\end{code}
The following holds:
\begin{prop} \label{bitcmp}
For all terms $t \in \C$, {\tt catsize t} $\leq$ {\tt bitsize t}.
\end{prop}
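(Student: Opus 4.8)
The plan is to prove the inequality by structural induction on $t$, comparing everything under \texttt{cmp} (equivalently, under the order of the instance $\N$, which the paper establishes to be isomorphic). The key preliminary observation I would make is that \texttt{catsize} and \texttt{bitsize} share the very same recursion skeleton: both return \texttt{e} on \texttt{e}, and on a nonempty $t$ with $c'\,t = (x,y)$ both return $s(\texttt{add}\,A\,B)$, the \emph{only} difference being that \texttt{bitsize} takes the first summand $A$ to be $x$ itself while \texttt{catsize} takes $A = \texttt{catsize}\,x$ (the second summand is correspondingly $\texttt{bitsize}\,y$ versus $\texttt{catsize}\,y$). Since \texttt{s} and \texttt{add} are monotone with respect to \texttt{cmp} (being successor and addition on $\N$), the inductive step for \texttt{catsize}$(t) \leq$ \texttt{bitsize}$(t)$ reduces to two component obligations: \texttt{catsize}$(y) \leq$ \texttt{bitsize}$(y)$, which is immediate from the induction hypothesis on the proper subterm $y$; and \texttt{catsize}$(x) \leq x$ for the first components.

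The obligation \texttt{catsize}$(x) \leq x$ is what forces a strengthening of the induction hypothesis. I would therefore prove, simultaneously by the same structural induction, the chain \texttt{catsize}$(t) \leq$ \texttt{bitsize}$(t) \leq t$. With the extra inequality in hand, the induction hypothesis on the subterm $x$ gives \texttt{catsize}$(x) \leq$ \texttt{bitsize}$(x) \leq x$, discharging the remaining obligation and closing the first inequality. The base case $t = \texttt{e}$ is trivial, since both sizes equal \texttt{e} there.

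It then remains to propagate the second inequality \texttt{bitsize}$(t) \leq t$. Unfolding the recursion, \texttt{bitsize}$(t) = s(\texttt{add}\,x\,(\texttt{bitsize}\,y))$, i.e. $x + \texttt{bitsize}(y) + 1$; using the hypothesis \texttt{bitsize}$(y) \leq y$ it suffices to establish $x + y + 1 \leq c(x,y)$. Here I would invoke the explicit formula (\ref{cnat}) for $c$ and split on the parity of $y$: for $y$ odd, $c(x,y) = 2^{x+1}y$, and for $y$ even, $c(x,y) = 2^{x+1}(y+1) - 1$. In both cases the inequality follows from the elementary bound $2^{x+1} \geq x+2$ together with $y \geq 1$ (respectively $y+1 \geq 1$), since $2^{x+1}y \geq (x+2)y = xy + 2y \geq x + y + 1$, and analogously with $y+1$ in the even case.

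The matching of the two recursion skeletons and the appeals to monotonicity of \texttt{s} and \texttt{add} are routine bookkeeping. The single genuinely numeric step, and hence the main obstacle, is the bound $x + y + 1 \leq c(x,y)$ that underlies \texttt{bitsize}$(t) \leq t$; once that elementary inequality is secured via (\ref{cnat}) and the parity case split, both inequalities of the strengthened claim close together, and the desired bound \texttt{catsize}$(t) \leq$ \texttt{bitsize}$(t)$ is the first link of the chain.
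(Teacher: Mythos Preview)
Your proof is correct and follows the same route as the paper's: structural induction on $t$, exploiting the fact that \texttt{catsize} and \texttt{bitsize} share the same recursion skeleton with the only difference in the first summand. The paper's proof is a single sentence asserting that the corresponding subterms compare as required; you are more careful in that you explicitly identify the auxiliary obligation $\texttt{catsize}\,x \le x$, discharge it by strengthening the induction hypothesis to the chain $\texttt{catsize}\,t \le \texttt{bitsize}\,t \le t$, and verify the supporting numeric inequality $x+y+1 \le c(x,y)$ directly from formula~(\ref{cnat}). This strengthening is precisely what is needed to turn the paper's terse one-line argument into a complete proof, but it is not a different approach---it is the natural completion of the same one.
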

\begin{proof}
By induction on the structure of $t$, observing that the two
functions have similar definitions and corresponding calls to
{\tt catsize} return terms inductively assumed smaller than those of
{\tt bitsize}.
\end{proof}

The following example illustrates their use:
\begin{codex}
*GCat> map catsize [0,100,1000,10000]
[0,7,9,13]
*GCat> map catsize [2^16,2^32,2^64,2^256]
[5,6,6,6]
*GCat> map bitsize [2^16,2^32,2^64,2^256]
[17,33,65,257]
\end{codex}
Figure \ref{tsizes} shows the reductions in representation
complexity compared with bitsize for an initial interval of $\N$,
from $0$ to $2^{10}-1$.
\FIG{tsizes}{Representation complexity (lower line) bounded by bitsize (upper line)} {0.25}{tsizes.pdf}

\subsection{Enumerating objects of given representation size}

The total number of Catalan objects corresponding to $n$ is given by:
\begin{equation} \label {cn}
C_n = {1 \over {n+1}}{{2n} \choose n} 
\end{equation}
It is shown in \cite{stanleyEC} that if
\begin{math}
L_n = {{2^{2n}} \over {n ^{3 \over 2}}\sqrt{\pi}}
\end{math}
then
\begin{math}
{\lim_{n \to \infty}{C_n \over L_n}}=1
\end{math} , providing an asymptotic bound for $C_n$.

The function {\tt cat} describes an
efficient computation for of the Catalan number $C_n$
using a direct recursion formula derived from equation (\ref{cn}).
\begin{code}
cat :: N->N
cat 0 = 1
cat n | n>0 = (2*(2*n-1)*(cat (n-1))) `div` (n+1)
\end{code}
The first few members of the sequence are computed below:
\begin{codex}
*GCat> map cat [0..14]
[1,1,2,5,14,42,132,429,1430,4862,16796,58786,208012,742900,2674440]
\end{codex}
The following holds.
\begin{prop}
Let {\tt k = catsize x} where {\tt x} is an object of type $\C$. Then {\tt x}
relates to {\tt k} as follows, for instances of $\C$:
\begin{enumerate}
\item {\tt x} is a binary tree of type $\T$ with $k$ internal nodes and $k+1$ leaves
\item {\tt x} is a multiway tree of type $\M$ with $k+1$ nodes
\item {\tt x} is a term  of the type $\P$ with $2k+2$ matching parentheses.
\end{enumerate}
\end{prop}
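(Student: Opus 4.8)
The plan is to isolate one unifying fact and then specialize it to each of the three instances. The key observation is that {\tt catsize} simply counts the number of applications of the generic constructor {\tt c} used to assemble $x$ out of copies of {\tt e}. Indeed, the defining equations {\tt catsize e}${}=0$ and {\tt catsize z}${}={}${\tt s (add (catsize x) (catsize y))} (where {\tt (x,y) = c' z}) give, by an immediate structural induction, that $k$ equals the number of {\tt c}-nodes in the canonical term tree of $x$ obtained by unfolding with {\tt c'} as permitted by (\ref{inv}) and (\ref{excl}). Since that abstract term tree is precisely the binary-tree instance $\T$ (in which {\tt c} is the branch constructor {\tt C} and {\tt e} is the leaf {\tt E}), part (1) is almost free: $k$ is exactly the number of {\tt C}, i.e.\ internal, nodes.

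For part (1) it then remains to recover the leaf count, for which I would prove the classical extended-binary-tree identity by induction on $t\in\T$. The base case {\tt E} has $0$ internal nodes and $1$ leaf; for {\tt C x y} the inductive hypotheses give subtrees with $(k_1,k_1+1)$ and $(k_2,k_2+1)$ internal nodes and leaves, so {\tt C x y} has $1+k_1+k_2=k$ internal nodes and $(k_1+1)+(k_2+1)=k+1$ leaves.

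For part (2) the first real subtlety appears, since the naive ``{\tt catsize} counts nodes'' reading breaks down: in the $\M$ instance {\tt c} is {\tt c (x,F xs) = F (x:xs)}, so an application of {\tt c} does not create a node but \emph{prepends one child}, i.e.\ creates exactly one child-edge of the multiway tree. Writing $N$ for the node count, with $N(F\,[])=1$ and $N(F\,[m_1,\dots,m_n])=1+\sum_i N(m_i)$, I would show by induction, peeling off the first child via {\tt c' (F (x:xs)) = (x,F xs)}, that {\tt catsize}$(F\,[m_1,\dots,m_n])=\sum_i N(m_i)=N(x)-1$. Hence the number of nodes is $k+1$.

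For part (3) I would run the analogous induction on the balanced-parentheses instance $\P$, tracking the total number $\ell(x)$ of parenthesis symbols. The base case {\tt e = P [L,R]} contributes $\ell=2$ (one pair) with $k=0$; for the step, {\tt c} maps $(a,b)$ to the word obtained by prefixing {\tt L} and concatenating, and using the fact that every word of $\mathcal{P}$ begins with {\tt L} a short length computation gives $\ell(\mathtt{c}(a,b))=\ell(a)+\ell(b)$, while $k=1+k_a+k_b$. Feeding in the hypotheses $\ell(a)=2k_a+2$ and $\ell(b)=2k_b+2$ yields $\ell(x)=2k+2$, i.e.\ $k+1$ matching pairs. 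The hard part throughout is not the arithmetic but correctly translating the single generic operation {\tt c} into each instance's native structural increment — a branch node in $\T$, a child-edge in $\M$, and a parenthesis pair in $\P$ — after which all three claims collapse to the same counting induction.
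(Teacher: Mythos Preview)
Your proof is correct. Both you and the paper start from the same key observation---that {\tt catsize} counts the number of applications of the generic constructor {\tt c}, equivalently the number of {\tt C} nodes in the $\T$ view---but then diverge in style. The paper's own proof is a one-line appeal to the literature: after noting the constructor count it simply cites the standard Catalan bijections in Stanley's \emph{Enumerative Combinatorics} for the remaining claims about $\M$ and $\P$. You instead carry out three self-contained structural inductions, one per instance, correctly identifying what a single application of {\tt c} contributes in each case (an internal node in $\T$, a child-edge in $\M$, a net of two symbols in $\P$). Your route is longer but has the virtue of being fully internal to the paper's definitions, whereas the paper's buys brevity at the cost of an external reference.
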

\begin{proof}
Observe that {\tt catsize k} counts the $C_{\tt k}-1$ 
number of {\tt C} constructors in objects
of size {\tt k} of type {\tt T}.
The rest is a consequence of well-known relations between Catalan numbers
and  nodes of binary trees, nodes of multiway trees and parentheses
in Dyck words as given in \cite{stanleyEC}.
\end{proof}

The function {\tt catsized}
enumerates for each of our instances, the objects of
size {\tt k} corresponding to a given Catalan number.
\begin{code}
catsized :: Cat a => a -> [a]
catsized a = take k [x | x<-iterate s e,catsize x == a] where
  k = fromIntegral (cat (n a))
\end{code}
The function extracts exactly {\tt k} elements (with {\tt k} 
the Catalan number corresponding to size {\tt a})
from the infinite
enumeration of Catalan objects of type {\tt Cat} 
provided by {\tt iterate s e},
as illustrated below:
\begin{codex}
*GCat> catsized (t 2)
[C E (C E E),C (C E E) E]
*GCat> catsized 4
[8,9,10,11,12,13,14,16,30,31,63,127,255,65535]
\end{codex}

\subsection{Best and worst cases}
Next we define the higher order function {\tt iterated} 
that applies {\tt k} times the function {\tt f}, which, contrary
to Haskell's {\tt iterate}, returns only the final element rather than
building the infinite list of all iterates.
\begin{code}
iterated :: Cat a => (t -> t) -> a -> t -> t 
iterated f k x |e_ k = x
iterated f k x = f (iterated f (s' k) x) 
\end{code}
We can exhibit, for a given bitsize, a best case
\begin{code}
bestCase :: Cat a => a -> a
bestCase k = iterated f k e where f x = c (x,e)
\end{code}
and a worst case
\begin{code}
worstCase :: Cat a => a -> a
worstCase k = iterated f k e where f x = c (e,x)
\end{code}
The following examples illustrate these functions:
\begin{codex}
*GCat> bestCase (t 5)
C (C (C (C (C E E) E) E) E) E
*GCat> n (bitsize (bestCase (t 5)))
65536
*GCat> n (catsize (bestCase (t 5)))
5
*GCat> worstCase (t 5)
C E (C E (C E (C E (C E E))))
*GCat> n (bitsize (worstCase (t 5)))
5
*GCat> n (catsize (worstCase (t 5)))
5
\end{codex}
The function {\tt bestCase } computes the iterated
exponent of 2 (tetration) and then applies the predecessor
to it. For $k=4$ it corresponds to\\\\
 ${(2^{{{(2^{{{(2^{{{(2^{{0+1}}-1)}+1}}-1)}+1}}-1)}+1}}-1)}=2^{2^{2^2}}-1=65535$.\\\\
For $k=5$ it corresponds to $2^{65536}-1$.
 
Note that our concept of representation complexity is only
a weak approximation
of Kolmogorov complexity \cite{vitanyi}.
For instance, the reader might
notice that  our worst case example
is computable by a program of relatively
small size. However, as {\tt bitsize}
is an upper limit to {\tt catsize}, we can
be sure that we are within constant
factors from the corresponding bitstring
computations, even on random data
of high Kolmogorov complexity.

Note also that an alternative concept of representation
complexity can be defined by considering the (vertices+edges) size of the DAG obtained
by folding together identical subtrees.

\subsection{A Concept of duality}
As our instances of {\tt Cat} are
members of the Catalan family of combinatorial objects,
they can be seen as binary trees with empty leaves.
Therefore, we can transform the tree representation of our
objects by swapping left and right branches under a binary tree view,
recursively.
The corresponding Haskell code is:
\begin{code}
dual :: Cat a => a -> a
dual x | e_ x = e
dual z = c (dual y,dual x) where (x,y) = c' z
\end{code}
As clearly {\tt dual} is an {\em involution} (i.e., {\tt dual}~$\circ$~{\tt dual} is the identity of $\C$), the corresponding permutation of $\N$
will bring together huge and small natural numbers sharing 
representations of the same size, as illustrated by the following example.
\begin{codex}
*GCat> map dual [0..20]
[0,1,3,2,4,15,7,6,12,31,65535,16,8,255,127,5,11,8191,4294967295,32,65536]
*CatsBM> catShow 10
"(()()()())"
*CatsBM> catShow (dual 10)
"((((()))))"
\end{codex} 
For instance, {\tt 18} and its dual {\tt 4294967295} have
representations of the same size,
except that the wide tree associated to {\tt 18} maps to the tall tree associated to
{\tt 4294967295}, as illustrated by Fig. \ref{tall}, with trees
folded to DAGs by merging together shared subtrees. Note the significantly different
bitsizes that can result for a term and its dual.
\HFIGS{tall}
{18 and its dual, with multiway trees folded to DAGs} {18}{dual of 18}{t18.pdf}{t18dual.pdf}{0.50}
\begin{codex}
*GCat> m 18
F [F [],F [],F [F []],F []]
*GCat> dual (m 18)
F [F [F [F [F []],F []]]]
*GCat> n (bitsize (m 18))
5
*GCat> n (bitsize (dual (m 18)))
32
\end{codex}
It follows immediately from the definitions of the respective functions, 
that, as an extreme case, the following holds:
\begin{prop}
{\tt $\forall${\tt x}. dual (bestCase x)} = {\tt worstCase x}.
\end{prop}

The following example illustrates it, with 
a tower of exponents {\tt 10000} tall, reached
by {\tt bestCase}. Note that we run it on objects of type {\tt T}, as it would
dramatically overflow memory on bitstring-represented numbers of type $\N$.
\begin{codex}
*GCat> dual (bestCase (t 10000)) == worstCase (t 10000)
True
\end{codex}
Another interesting property of {\tt dual} is illustrated by the following examples:
\begin{codex}
*GCat> [x|x<-[0..2^5-1],cmp (t x) (dual (t x)) ==  LT]
[2,5,6,8,9,10,11,13,14,17,18,19,20,21,22,23,25,26,27,28,29,30]
*GCat> [x|x<-[0..2^5-1],cmp (t x) (dual (t x)) ==  EQ]
[0,1,4,24]
*GCat> [x|x<-[0..2^5-1],cmp (t x) (dual (t x)) ==  GT]
[3,7,12,15,16,31]
\end{codex}
The discrepancy between the number of elements for which {\tt x} is smaller than
{\tt (dual x)} and those for which it is greater or equal, is growing
as numbers get larger,
contrary to the intuition that, as {\tt dual} is an {\em involution}, the grater and smaller
sets would have similar sizes for an initial interval of $\N$. 
For instance, {\em between $0$ and $2^{16}-1$ one will find only $68$ numbers for which the dual is smaller and $11$ for which it is equal}.

Note that random elements of $\N$ tend to have relatively {\em shallow} (and wide)
multiway tree representations, given
that the average size of a contiguous block of {\tt 0}s or {\tt 1}s is {\tt 2}.
Consequently, 
{\tt dual} provides an interesting bijection between ``incompressible'' natural numbers (of high Kolmogorov complexity) and their {\em deep}, highly compressible, duals.

The existence of such a bijection (computed by a program of constant size)
between natural numbers of high and low Kolmogorov complexity reveals a
somewhat non-intuitive aspect of this concept and its use for the
definition of randomness \cite{vitanyi}.

We will explore next definitions for concepts of depth for our
number representation.

\subsection{Representation Depth} \label{repdep}
As we can switch between the binary and multiway view of our
Catalan objects, we will define two sets of representation-depth 
functions. They use the the helper 
functions  minimum {\tt min2} and maximum {\tt max2}.
\begin{code}
min2, max2 :: Cat a => a -> a -> a
min2 x y = if LT==cmp x y then x else y
max2 x y = if LT==cmp x y then y else x
\end{code}
Corresponding to the {\em binary tree view} exemplified by instance {\tt T},
we define {\tt maxTdepth} 
returning the length of
the longest
path from the root to a leaf.
\begin{code}
maxTdepth :: Cat a => a -> a
maxTdepth z | e_ z = z
maxTdepth z = s (max2  (maxTdepth x) (maxTdepth y)) where (x,y) = c' z
\end{code}
\begin{codeh}
minTdepth :: Cat a => a -> a
minTdepth z | e_ z = z
minTdepth z = s (min2  (minTdepth x) (minTdepth y)) where (x,y) = c' z  
\end{codeh}

Corresponding to the {\em multiway tree view} exemplified by instance {\tt M}
we define {\tt maxTdepth} 
returning the length of
the longest
path from the root to a leaf.
\begin{code}
maxMdepth :: Cat a => a -> a
maxMdepth z | e_ z = z
maxMdepth z = s (foldr max2 m ms) where
  (m:ms) = map maxMdepth (to_list z)
\end{code}  
\begin{codeh}
minMdepth :: Cat a => a -> a
minMdepth z | e_ z = z
minMdepth z = s (foldr min2 m ms) where
  (m:ms) = map minMdepth (to_list z)
\end{codeh}

The following simple facts hold, derived from properties of binary and multiway rooted ordered trees.
\begin{prop}
Let ${\tt x} \geq {\tt y}$ stand for {\tt cmp x y == GT} and $=$ stand for {\tt cmp x y == EQ}.
\begin{enumerate}
\item For all objects {\tt x} of type $\C$, {\tt catsize x} $\geq$ {\tt maxTdepth x} $\geq$ {\tt maxMdepth x}.   

\item For all objects {\tt x} of type $\C$, {\tt catsize x} = {\tt catsize (dual x)} 
\item
For all objects {\tt x} of type $\C$ 
{\tt maxTdepth x} = {\tt maxTdepth (dual x)}. 

\item For all objects {\tt x} of type $\C$, {\tt maxMdepth (bestCase x)} $=$ {\tt x}.
\end{enumerate}
\end{prop}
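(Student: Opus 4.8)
The plan is to prove all four statements by structural induction on the binary-tree view of objects of type $\C$ (equivalently, by Peano induction on $\langle\C,\texttt{e},\texttt{s}\rangle$, which is available from the earlier discussion). Throughout I will use three elementary facts: that \texttt{add} and \texttt{max2} compute ordinary $+$ and $\max$ on the naturals represented by $\C$; that both combining operations are \emph{symmetric} (commutativity of $+$, symmetry of $\max$); and that all the quantities involved (\texttt{catsize}, \texttt{maxTdepth}, \texttt{maxMdepth}) are nonnegative, so that $\texttt{add}\,a\,b \geq \texttt{max2}\,a\,b$ for all $a,b$.

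Statements (2) and (3) are the easiest, and I would dispatch them first. Both \texttt{catsize} and \texttt{maxTdepth} recurse on the two children $x,y$ obtained from $\texttt{c'}\,z$ and combine the results with a \emph{symmetric} operation (\texttt{add} and \texttt{max2} respectively), whereas \texttt{dual} rebuilds the node as $\texttt{c}\,(\texttt{dual}\,y,\texttt{dual}\,x)$, i.e. with the two subtrees swapped. The base case $z=\texttt{e}$ is immediate since $\texttt{dual}\,\texttt{e}=\texttt{e}$. In the inductive step I substitute the induction hypotheses $\texttt{catsize}(\texttt{dual}\,x)=\texttt{catsize}\,x$ and $\texttt{catsize}(\texttt{dual}\,y)=\texttt{catsize}\,y$ (resp. for \texttt{maxTdepth}) and then invoke commutativity of \texttt{add} (resp. symmetry of \texttt{max2}) to reorder the two arguments, recovering the original value.

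For statement (1), the inequality $\texttt{catsize}\,x \geq \texttt{maxTdepth}\,x$ is a direct simultaneous induction on the binary view: in the node case \texttt{catsize} returns $\texttt{s}(\texttt{add}(\texttt{catsize}\,x)(\texttt{catsize}\,y))$ and \texttt{maxTdepth} returns $\texttt{s}(\texttt{max2}(\texttt{maxTdepth}\,x)(\texttt{maxTdepth}\,y))$, so monotonicity of \texttt{s} together with $\texttt{add}\,a\,b\geq\texttt{max2}\,a\,b$ and the two induction hypotheses closes it. The second inequality $\texttt{maxTdepth}\,x \geq \texttt{maxMdepth}\,x$ is the main obstacle, because \texttt{maxMdepth} is defined over the \emph{multiway} view through \texttt{to\_list}, while the induction naturally runs over the binary view via \texttt{c'}. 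The reconciling observation is that \texttt{to\_list} realizes the first-child/next-sibling encoding: for $z=\texttt{c}(h,t)$ we have $\texttt{to\_list}\,z = h : \texttt{to\_list}\,t$, so the multiway children of $z$ are $h$ together with the multiway children of $t$. Unfolding \texttt{maxMdepth} accordingly gives $\texttt{maxMdepth}\,z = \texttt{s}(\texttt{maxMdepth}\,h)$ when $t=\texttt{e}$, and $\texttt{maxMdepth}\,z = \texttt{s}(\texttt{max2}(\texttt{maxMdepth}\,h)(\texttt{s'}(\texttt{maxMdepth}\,t)))$ when $t \neq \texttt{e}$, where $\texttt{s'}$ strips the leading \texttt{s} of $\texttt{maxMdepth}\,t$. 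Comparing with $\texttt{maxTdepth}\,z = \texttt{s}(\texttt{max2}(\texttt{maxTdepth}\,h)(\texttt{maxTdepth}\,t))$ and applying the hypotheses $\texttt{maxTdepth}\,h \geq \texttt{maxMdepth}\,h$ and $\texttt{maxTdepth}\,t \geq \texttt{maxMdepth}\,t$ — noting $\texttt{s'}(\texttt{maxMdepth}\,t) \leq \texttt{maxMdepth}\,t \leq \texttt{maxTdepth}\,t$ in the nonempty case — yields $\texttt{maxMdepth}\,z \leq \texttt{maxTdepth}\,z$ in both cases. The care needed with the case split on $t=\texttt{e}$ and with the predecessor step is what makes this the delicate part.

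Finally, statement (4) follows by Peano induction on \texttt{x}. Since $\texttt{bestCase}\,\texttt{x} = \texttt{iterated}\,f\,\texttt{x}\,\texttt{e}$ with $f\,w = \texttt{c}(w,\texttt{e})$, and $\texttt{s'}(\texttt{s}\,\texttt{x})=\texttt{x}$, I unfold $\texttt{bestCase}(\texttt{s}\,\texttt{x}) = \texttt{c}(\texttt{bestCase}\,\texttt{x},\texttt{e})$, a node whose unique multiway child is $\texttt{bestCase}\,\texttt{x}$ (because $\texttt{to\_list}(\texttt{c}(h,\texttt{e})) = [h]$). Hence $\texttt{maxMdepth}(\texttt{bestCase}(\texttt{s}\,\texttt{x})) = \texttt{s}(\texttt{maxMdepth}(\texttt{bestCase}\,\texttt{x})) = \texttt{s}\,\texttt{x}$ by the induction hypothesis, the base case $\texttt{bestCase}\,\texttt{e}=\texttt{e}$ being immediate.
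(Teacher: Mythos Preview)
Your proof is correct. The paper does not actually prove this proposition: it merely introduces the four items with the sentence ``The following simple facts hold, derived from properties of binary and multiway rooted ordered trees'' and gives no argument. So your structural-induction treatment is strictly more detailed than anything in the paper.

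Your main contribution beyond the paper is the careful handling of the inequality $\texttt{maxTdepth}\,x \geq \texttt{maxMdepth}\,x$: unfolding \texttt{maxMdepth} through the first-child/next-sibling decomposition $\texttt{to\_list}(\texttt{c}(h,t)) = h:\texttt{to\_list}\,t$ to obtain the recurrence $\texttt{maxMdepth}\,z = \texttt{s}(\texttt{max2}(\texttt{maxMdepth}\,h)(\texttt{s'}(\texttt{maxMdepth}\,t)))$ for $t\neq\texttt{e}$, and then using $\texttt{s'}(\texttt{maxMdepth}\,t)\leq\texttt{maxMdepth}\,t\leq\texttt{maxTdepth}\,t$ to close the induction. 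This is exactly the recurrence the paper hints at in its hidden \texttt{maxMdepth'} code (commented ``equivalent to maxMdepth''), so your derivation is well aligned with the author's unstated reasoning. The remaining items (2), (3), (4) you dispatch by the obvious symmetry and Peano arguments, which is all that is needed.
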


\begin{codeh}
-- equivalent to maxMdepth

maxMdepth' :: Cat a => a -> a
maxMdepth' z | e_ z = z
maxMdepth' z = s (max2 (maxMdepth' x) y') where 
  (x,y) = c' z
  y' = if c_ y then maxMdepth' (snd (c' y)) else e

rratio x = fromIntegral md / fromIntegral td where
  md = maxMdepth x
  td = maxTdepth x 
  
rdif x = fromIntegral td - fromIntegral md where
  md = maxMdepth x
  td = maxTdepth x   
\end{codeh}

\section{Compact representation and tractable computations with some giant numbers} \label{giant}

We will illustrate the representation and computation power of our new numbering system
with two case studies. The first shows that several {\em record holder primes} have
compact Catalan representations.

The second shows computation of the hailstone sequence for an equivalent of the {\em Collatz conjecture} on giant  numbers.

\subsection{Record holder primes}

Interestingly, most record holder giant primes have a fairly simple structure:
they are of the form $p=i2^k \pm j$ with $i \in \N$ and $j \in \N $ comparatively small.
This is a perfect fit for our representation which favors numbers ``in the
neighborhood'' of linear combinations of (towers of) exponents of two with 
comparatively small coefficients, resulting in large contiguous blocks
of $0$s and $1$s when represented as bitstrings.  

The largest known primes (as of early 2019) of several types are given by the following Haskell code.
\begin{code}
mersennePrime f = s' (exp2 (f 82589933))
generizedFermatPrime f = s (leftshiftBy (f 9167433) (f 27653))
cullenPrime f = s (leftshiftBy x x) where x = f 6679881
woodallPrime f = s' (leftshiftBy x x) where x = f 3752948
prothPrime f = s (leftshiftBy (f 13018586) (f 19249))
sophieGermainPrime f = s' (leftshiftBy (f 666667) (f 18543637900515))
twinPrimes f = (s' y,s y) where 
  y = leftshiftBy (f 666669) (f 3756801695685)
\end{code}

\begin{codeh}  

giants :: Cat a => (N -> a) -> ([String], [a])
giants f = (ns,ps) where
  ps = [mersennePrime f, generizedFermatPrime f,  cullenPrime f, woodallPrime f,
        sophieGermainPrime f, fst (twinPrimes f), snd (twinPrimes f)]
  ns = ["mersenne48", "generizedFermatPrime",  "cullenPrime", "woodallPrime",
        "sophieGermainPrime", "twinPrimes1", "twinPrimes2"]
        
-- sizes for primes and their  duals      
sizes d f =  zip (zip3 ns bs cs) (zip3 maxs tmaxs dmaxs) where 
  ps = map d (snd (giants f))
  ns = fst (giants f)
  
  bs = map (n.catsize) ps
  cs = map (n.catsize) ps
  maxs = map (n.maxMdepth) ps
  tmaxs = map (n.maxTdepth) ps
  dmaxs = map (n.maxMdepth.dual) ps

-- usage: showSizes t
showSizes f = mapM_ print (sizes id f)
showDSizes f = mapM_ print (sizes dual f)

-- usage: compDuals t
compDuals f = zip ns (zipWith cmp ps ds) where
  ps = snd (giants f)
  ds = map dual ps
  ns = fst (giants f)

ilog2star :: Cat a => a->a
ilog2star x | e_ x = x
ilog2star x = s (ilog2star (ilog2 x))

logsizes =  map (n.ilog2star) (snd (giants t))

dlogsizes = map (n.ilog2star) (map dual (snd (giants t)))

maxmdepths = map n (map maxMdepth (snd (giants t)))
maxmDdepths = map n (map maxMdepth (map dual (snd (giants t))))

maxTdepths = map n (map maxTdepth (snd (giants t)))
maxTDdepths = map n (map maxTdepth (map dual (snd (giants t))))
\end{codeh}

For instance, the largest known prime, having about 17 million decimal digits,
(a Mersenne number) has an unusually small Catalan representation
as illustrated below:
\begin{codex}
*GCat> catShow (mersennePrime t)
"(((())(())()(()())(()())(()())((()))(())()(()())(())()))"
*GCat> n (catsize (mersennePrime t))
27
*GCat> n (bitsize (mersennePrime t))
82589933
\end{codex}
Note the use of parameter {\tt t} 
indicating that computation proceeds with type {\tt T},
as it would overflow memory with with bitstring-represented natural numbers.

Figure \ref{giantprimes} summarizes comparative bitstring and Catalan representation
sizes {\tt bitsize} and {\tt catsize} for record holder primes.
 
\begin{figure}
\begin{center}
\begin{tabular}{||l|r|r||}
\hline
\multicolumn{1}{||c|}{{Record holder prime}} & 
  \multicolumn{1}{c|}{{\tt bitsize}} &
  \multicolumn{1}{c||}{{\tt catsize}}\\
\hline 
\hline
Mersenne prime & 82,589,933 & 27 \\ \hline
Generalized Fermat prime& 9,167,448 & 37 \\ \hline
Cullen prime & 6,679,904 & 46   \\ \hline
Woodall prime & 3,752,970 & 37 \\ \hline
Sophie Germain prime & 666,712 & 62 \\ \hline
Twin primes 1 & 666,711 & 59 \\ \hline
Twin primes 2 & 666,711& 60 \\ \hline
\end{tabular} \\
\medskip
\caption{Bitsizes vs. Catalan representation sizes of record holder primes
\label{giantprimes}}
\end{center}
\end{figure}

\begin{codex}
*GCat> catShow (genFermatPrime t)
"(()((()())(()())()(())()(()())((())())()()(()())())
  ()()()(()(()))(())()(()))"
*GCat> catShow (cullenPrime t)
"(()((()())(()())()()()()(())()((()))()()(())(()))()
 (())()(())()()()()(())()((()))()()(())(()))"
*GCat> catShow (woodallPrime t)
"((()()()()(()()())((()))()()()(())(()()))(())(()()())
 ((()))()()()(())(()()))"
*GCat> catShow (sophieGermainPrime t)
"((()()()()()()((()))(())()()(()())()()())()()(()())(()())(())
 (())()()(())(())(())((()))()(())((()))(())()(()())()(())((()))())"
*GCat> catShow (fst (twinPrimes t))
"(((())(())()()((()))(())()()(()())()()())(())()(()())(()(()))(())
 ()(())()()()()(())()()(())(())()()()()()()()(())()(()))"
*GCat> catShow (snd (twinPrimes t))
"(()((())()()()()((()))(())()()(()())()()())()()()(()())(()(()))(())
 ()(())()()()()(())()()(())(())()()()()()()()(())()(()))"
*GCat> catShow (snd (mersenne48 t))

*GCat> catsize (dual (sophieGermainPrime t))
C E (C (C (C E E) (C E E)) E)
*GCat> n it
62
*GCat> catsize (dual (prothPrime t))
C E (C (C E E) (C E (C E (C E E))))
*GCat> n it
41
*GCat> catsize (add (dual (prothPrime t)) (dual (sophieGermainPrime t)))
C (C E E) (C E (C E (C E (C (C E E) (C (C E E) E)))))
*GCat> n it
404
\end{codex}

\subsection{Computing the Collatz/Syracuse sequence for huge numbers} \label{collatz}

As an interesting application, that achieves something one cannot
do with traditional arbitrary bitsize integers is to explore the behavior of interesting
conjectures in the ``new world'' of numbers limited not by their
sizes but by their representation complexity.
The Collatz conjecture \cite{arxiv:lagarias} states that the function
\begin{equation}
collatz(x)=
\begin{cases}
x \over 2  &  \text{if $x$ is even},\\
3x+1 &  \text{if $x$ is odd}.\\
\end{cases}
\end{equation}
reaches $1$ after a finite number of iterations.
An equivalent formulation, by grouping together all
the division by 2 steps, is the function:
\begin{equation}
collatz'(x)=
\begin{cases}
x \over 2^{\nu_2(x)}  &  \text{if $x$ is even},\\
3x+1 &  \text{if $x$ is odd}.\\
\end{cases}
\end{equation}
where $\nu_2(x)$ denotes the {\em dyadic valuation of x}, i.e., the largest
exponent of 2 that divides x. One step further, the
{\em syracuse function} is defined as the odd integer $k'$ such
that $n=3k+1=2^{\nu_2(n)}k'$. One more step further, by writing $k'=2m+1$
we get a function that associates $k \in \N$ to $m \in \N$.

The function {\tt tl} computes efficiently the equivalent of 
\begin{equation}
tl(k) = {{{k \over 2^{\nu_2(k)}}-1} \over 2}
\end{equation}
Together with its {\tt hd} counterpart, it is defined as
\begin{code}
hd x = fst (decons x)

tl x = snd (decons x)

decons a | even_ a = (s x,hf (s' xs)) where (x,xs) = c' a
decons a = (e,hf (s' a))
\end{code}
where the function {\tt decons} is the inverse of 
\begin{code}
cons (x,y) = leftshiftBy x (s (db y))
\end{code}
corresponding to $2^x~(2y+1)$.
Then our variant of the {\em syracuse function} corresponds to
\begin{equation}
syracuse(n) = tl(3n+2)
\end{equation}
which is defined from $\N$ to $\N$ and can be implemented
as
\begin{code}
syracuse :: Cat b => b -> b
syracuse n = tl (add n (db (s n)))
\end{code}
Note that all operations except the addition {\tt add} are constant average time. 

The function {\tt nsyr} computes the iterates of this function, 
until (possibly) stopping:
\begin{code}
nsyr :: Cat t => t -> [t]
nsyr x | e_ x = [e]
nsyr x = x : nsyr (syracuse x)
\end{code}

It is easy to see that the Collatz conjecture is true\footnote{
As  a side note, it might be interesting to approach the Collatz conjecture
by trying to compare the growth in the {\em Catalan representation size}
induced by {\tt 3n+2} expressed as {\tt add n (db (s n))} vs. its
decrease induced by {\tt tl}.
}
if and only if {\tt nsyr}
terminates for all $n$, as illustrated by the following example:
\begin{codex}
*GCat> nsyr 2019
[2019,3029,4544,3408,2556,1917,2876,2157,3236,2427,3641,5462,2048,1536,1152,
 864,648,486,182,68,51,77,116,87,131,197,296,222,83,125,188,141,212,159,239,
 359,539,809,1214,455,683,1025,1538,288,216,162,30,11,17,26,2,0]
\end{codex}
Moreover, in this formulation, the conjecture implies that the
the elements of sequence generated by {\tt nsyr} are all different.

The next examples will show that computations for
{\tt nsyr} can be efficiently carried out
for giant numbers that, with the traditional bitstring representation,
would easily
overflow the memory of a computer with more transistors than
the atoms in the known universe.

And finally something we are quite sure has never been computed before,
we can also start with a {\em tower of exponents 100 levels tall}:
\begin{codex}
*GCat> take 100 (map(n . catsize) (nsyr (bestCase (t 100))))
[100,199,297,298,300,...,440,436,429,434,445,439]
\end{codex}
Note that we have only computed the decimal equivalents
of the representation complexity {\tt catsize} 
of these numbers, which, obviously,
would not fit themselves in a decimal representation.

A slightly longer computation (taking a few minutes)
can be also performed on a twin tower of exponents {\tt 101} 
and {\tt 103} levels tall like in
\begin{codex}
*GCat> take 2 (map(n.catsize) (nsyr 
         (add (bestCase (t 101)) (bestCase (t 103)))))
[10206,10500]
\end{codex}
where the Catalan representation size at {\tt 10500}, 
proportional to the product of the representation sizes of the operands, 
slows down computation but
still keeps it in a tractable range.

\section{Discussion} \label{disc}
Our Catalan families based numbering system provides compact
representations of giant numbers and can perform interesting
computations intractable with their bitstring-based counterparts.

This ability comes from the fact that
our canonical tree representation, in contrast to the traditional
binary representation supports constant average time and
space application of exponentials.

We have not performed a precise time and space complexity analysis
(except for the the constant average-time operations), but our
experiments indicate that (low) polynomial bounds are likely
for addition and subtraction with worst cases of size 
expansion happening with towers of exponents, where results 
are likely to be proportional to the product of the 
height of the towers, as illustrated in subsection \ref{collatz}.

Our multiplication and division algorithms are derived from
relatively simple traditional ones, with some focus on
taking advantage of large blocks of $0$ and $1$ digits.
However, it would be interesting to further explore
asymptotically better algorithms like Karatsuba multiplication
or division based on Newton's method.

As most numbers have high Kolmogorov complexity, it makes sense
to extend our type class mechanism to devise a {\em hybrid} numbering
system that switches between representations as needed, to
delegate cases where there are no benefits, to the underlying
bitstring representation. This is likely to be needed for designing 
a practical extension with Catalan representations
for the arithmetic subsystem
used in a programming language like Haskell that
benefits from the very fast C-based GMP library.

\section{Conclusion} \label{concl}

We have described through a type class mechanism an 
arithmetic system working on members of the  Catalan family of
combinatorial objects, that takes advantage of compact
representations of some giant numbers and can perform interesting
computations intractable with their bitstring-based counterparts.

This ability comes from the fact that tree representation, in contrast to the traditional
binary representation, supports constant average time and
space application of exponentials.

The resulting numbering system is {\em canonical} - each natural
number is represented as a unique object.
Besides unique decoding, canonical representations allow
testing for {\em syntactic equality}.
It is also {\em generic} -- no commitment is made to a particular member
of the Catalan family -- our type class provides all
the arithmetic operations to several instances, including typical
members of the Catalan family together with the usual natural numbers.

While these algorithms share similar complexity
with those described in \cite{ictac14tarau},
which focuses on instance {\tt M}
of our type class {\tt Cat}, 
the generic implementation in this paper
enables one to perform efficient arithmetic
operations with any of the 58 known
instances of the Catalan family 
described in \cite{stanleyAD,stanleyEC},
some of them with geometric, combinatorial, 
algebraic, formal languages, number and set theoretical
or physical flavor. 
Therefore, we believe that
this generalization is significant and it  opens the door
to new and possibly unexpected applications.

\section*{Acknowledgement} 
This research has been supported by NSF grant \verb~1423324~.


\bibliographystyle{splncs03}
\bibliography{../go/tarau,theory,proglang}

\section*{Appendix}

\subsection*{A subset of Haskell as an executable function notation}

We mention, for the benefit of the
reader unfamiliar with Haskell, that a notation like {\tt f x y} stands for $f(x,y)$,
{\tt [t]} represents sequences of type {\tt t} and a type declaration
like {\tt f :: s -> t -> u} stands for a function $f: s \times t \to u$
(modulo Haskell's ``currying'' operation, given the isomorphism between 
the function spaces ${s \times t} \to u$ and ${s \to t} \to u$). 

Our Haskell functions are always represented as sequences
of recursive equations guided by pattern matching, conditional
to constraints (boolean 
relations {\em following the ``\verb~|~'' symbol and before
the ``\verb~=~'' symbol}) in an equation.

Locally scoped helper functions are defined in Haskell
after the ``{\tt where}'' keyword, using the same equational style.
The composition of functions {\tt f} and {\tt g} is denoted {\tt f . g}.
It is  customary in Haskell
to write $f=g$ instead of $f~x=g~x$ (``point-free'' notation).
We  make some use of Haskell's ``call-by-need'' evaluation
that allows us to work with infinite
sequences, like the {\tt [0..]} infinite list notation, 
as well as higher order functions
(having other functions as arguments). Note also that the result
of the last evaluation is stored in the special Haskell
variable {\tt it}. 

By restricting ourselves to this {\em ``Haskell~- -''}
subset, our code can also be easily transliterated into
a system of rewriting rules, other pattern-based functional
languages as well as deterministic Horn Clauses. 

\end{document}